\documentclass{amsart}
\usepackage[a4paper,margin=2.5cm]{geometry}
 
\usepackage[T1]{fontenc}
\usepackage{amssymb, amsmath, amsthm, mathtools, stmaryrd}
\usepackage{enumitem}

\usepackage{url}

\usepackage[backend=bibtex,style=alphabetic,sorting=nyt,isbn=false,url=false,doi=true,maxalphanames=10,minalphanames=4,mincitenames=4,maxcitenames=10,minnames=4,maxnames=10,giveninits=true,maxbibnames=99]{biblatex}

\usepackage{xcolor}
\usepackage[colorlinks=true]{hyperref}

\newcommand{\Res}{\mathop{\mathrm{Res}}}

\newcommand{\sumk}{\mathbf{k}}

\newcommand{\coeff}[1]{\mathop{\mathrm{Coeff}_{[{#1}]}}}
\newcommand{\restr}[2]{\mathop{\big\lfloor_{{#1}\to {#2}}}}
\newcommand{\set}[1]{\llbracket {#1} \rrbracket}

\usepackage{mathtools}

\numberwithin{equation}{section} 

\theoremstyle{plain}
\newtheorem{theorem}{Theorem}[section]
\newtheorem{conjecture}[theorem]{Conjecture}
\newtheorem{proposition}[theorem]{Proposition}
\newtheorem{lemma}[theorem]{Lemma}

\theoremstyle{definition}

\newtheorem{remark}[theorem]{Remark}

\title[Resonance transformations: a proof of Artemev's conjecture]{Resonance transformations for the $(2,2p+1)$ minimal string via $x-y$ swap: a proof of Artemev's conjecture}

\author{Kornelis Dekinga}

\address{KdV Institute for Mathematics and Institute of Physics, Universiteit van Amsterdam, Amsterdam, Nederland}
\email{kornelis.dekinga@student.uva.nl}	

\author{Sergey Shadrin}

\address{Korteweg-de Vriesinstituut voor Wiskunde, Universiteit van Amsterdam, Postbus 94248, 1090GE Amsterdam, Nederland}
\email{s.shadrin@uva.nl}	

\author{Erik Verlinde}

\address{Institute of Physics, Universiteit van Amsterdam, Postbus 94485
	1090 GL Amsterdam, 1090GE Amsterdam, Nederland}
\email{e.p.verlinde@uva.nl}

\begin{document}

\begin{abstract} This paper contains a proof of a recent conjecture of Artemev that connected the resonance transformations for the $(2,2p+1)$ minimal string to the $x-y$ swap in the theory of topological recursion.  
\end{abstract}

\maketitle

\tableofcontents

\section{Introduction}









\subsection{\texorpdfstring{$(p,q)$}{(p,q)} minimal string theory}
In the worldsheet formalism $(p,q)$ minimal string theory is defined as a $(p,q)$ minimal model conformal field theory coupled to Liouville theory. Interesting quantities in this theory are amplitudes of tachyons. Tachyons are obtained by first dressing a conformal matter primary $\mathcal{O}_{p,q}$ by a Liouville primary $V_{\alpha_{p,q}}$ such that the combined conformal dimension is $(1,1)$, and then multiplying by the ghosts $\mathfrak{c}\overline{\mathfrak{c}}$, see e.g.~\cite{SibergShih}. The amplitudes are then computed by integrating CFT correlators of these operators over moduli space. Another approach to 2D quantum gravity is via discretized random surfaces. In this approach the worldsheets of strings are approximated by Feynman graphs in the limit where the number of vertices becomes infinite. These Feynman graphs can be described by random matrices, which reduces the problem to that of a matrix integral~\cite{GrossMigdal1990,BrezinKazakov1990,DouglasShenker1990}. To ensure that the matrix models reproduce the same results as the worldsheet formalism, the potential needs to be tuned in a specific way while taking the $N\to\infty$ limit. In~\cite{Douglas1990} a formula is proposed that gives the right tuning to obtain the $(p,q)$ model. This formula is formulated in terms of two differential operators $P$ and $Q$ of order $p$ and $q$ respectively, and the partition function can be extracted from these operators. However, it was already noted in~\cite{moore-seiberg} that there is a discrepancy between the matrix model and worldsheet sides. The proposed solution is a nonlinear transformation of the coupling constants on the matrix model side, which is now often referred to as the ``resonance transformation''. For the $(2,2p+1)$ minimal string this has been confirmed to work for the genus zero, three and four point functions in~\cite{BelavinZam-resonance}.

It is believed that higher loop corrections are governed by topological recursion. The spectral curve for $(p,q)$ minimal string theory is given by two Chebyshev polynomials of the first kind. It has been shown that this choice reproduces the expected fusion rules in the $(2,2p+1)$ case~\cite{marshakov}. See~\cite{SibergShih} for an interpretation of the spectral curve in terms of the FZZT partition function. In this approach it is also necessary to use the resonance transformation to make the results agree with the worldsheet formalism. In \cite{artemev} it was conjectured that the resonance transformation is unnecessary if one swaps $x$ and $y$ in the spectral curve. A similar idea was proposed and confirmed for the three and four point functions on the sphere in~\cite{BelavinRud}, where the connection to Frobenius manifolds was used~\cite{BelavinDubrovinMukh}.

These conjectures suggest that the resonance transformation is closely related to swapping $p$ and $q$. This idea is strengthened by the work of~\cite{Fukuma1992}, where a relation was found between the operators $P$ and $Q$ of the $(p,q)$ solution to the Douglas string equation to those of the $(q,p)$ solution. The coefficients of these solutions are related via a transformation that has the same structure as the resonance transformation. Another interpretation of the resonance transformation is given in~\cite{BelavinZam-resonance,BelavinDubrovinMukh}. It is possible to add a delta type term to the CFT correlators in the moduli integral when two points coincide. This can be written as a resonance transformation of the coupling constants.

\subsection{\texorpdfstring{$x-y$}{x-y} swap}
The technique of $x-y$ swap is a powerful tool in the framework of topological recursion of Chekhov-Eynard-Orantin~\cite{CEO,EynardOrantin-topologicalrecursion,Eynard-Lecture} that leads itself to a full revision and generalization of this theory~\cite{ABDKS-gen}. The explicit closed differential-algebraic formula was first conjectured in~\cite{borot2023functionalrelationshigherorderfree}. Subsequently, it was proved (under the assumption of the existence of the loop insertion operator) in genus $0$ in~\cite{hock2022xy}, drastically simplified in~\cite{HockFormula}, and finally proved in full generality in~\cite{alexandrov2022universal}. The $x-y$ swap has very strong ties with KP integrability~\cite{ABDKS-KPxy}, and in this way it ultimately generalizes the formulas  for the $p-q$ duality in~\cite{KharchevMarshakov}. 

The latter identification of $p-q$ duality as the special case of the $x-y$ swap might explain why the induced change of flat parametrization in~\cite[Eq.~(4.15)]{Fukuma1992} coincides (up to tuning of some regular terms) with the resonance tranformations in~\cite{Tarno,artemev}. Indeed, the formulas for the $p-q$ duality in~\cite{KharchevMarshakov} are derived from the swap of Kac-Schwarz operators that follows from the $(p,q)$ swap in the context of the Douglas equation in~\cite{Fukuma1992}. 

However, we note that a direct connection between the Douglas equation and topological recursion is not really worked out in the literature, except for the computations done in~\cite{BelavinDubrovinMukh} that connect the Douglas equation to the Dubrovin-Frobenius manifolds theory~\cite{Dub-2dtft}, which in turn can be placed in the context of topological recursion via its Landau-Ginzburg superpotential description~\cite{DNOPS-1,DNOPS-2}. Moreover, in~\cite{marshakov} it is argued that there should be some discrepancy that has to be taken into account, and indeed it was noticed in~\cite{artemev} and confirmed by the computation below that there is some mismatch of the regular terms (luckily, irrelevant for the computation of the correlators).

\subsection{Notation}

Throughout the paper we use the following notation: 
\begin{itemize}
	\item $(x)_p \coloneqq \Gamma(x+p)/\Gamma(x)=x(x+1)\cdots (x+p-1)$.
	\item $\set{m}\coloneqq \{1,\dots,m\}$.
	\item For any $I\subseteq \set{m}$, $|I|$ is the cardinality of $I$.
	\item Let $k_i$ be numbers indexed by $i\in \set{m}$. Then for any $I\subseteq \set{m}$, $\sumk_I\coloneqq \sum_{i\in I} k_i$. 
	\item Let $z_i$ be formal variables indexed by $i\in \set{m}$. Then for any $I\subseteq \set{m}$, $z_I\coloneqq \{z_i\}_{i\in I}$. 
	\item $T_d(z)$ denotes the Chebyshev polynomial of degree $d$. 
\end{itemize}

\subsection{Acknowledgements} We thank A.~Artemev for very useful discussions and important insights. 

S.~S. was supported by the Dutch Research Council, grant no. OCENW.M.21.233. 

\section{Basic setup for Artemev's conjecture}

\subsection{Minimal string spectral curve side} We follow~\cite[Sec.~3]{artemev}.
Apply the topological recursion procedure to the curve $\mathbb{C}\mathrm{P}^1$ with the standard Bergman kernel given in a global affine coordinate $z$ by $B(z_1,z_2) = \frac{dz_1dz_2}{(z_1-z_2)^2}$ and with the functions $x$ and $y$ given in terms of the Chebyshev polynomials as 
\begin{align} \label{eq:MinimalStringSpectralCurve}
	x & = 2 u T_2(z) = u\Big(w^2 + \frac 1 {w^2}\Big),
	& 
	y & = 2 u^{p+\frac 12} T_{p+1}(z) = u^{p+\frac 12} \Big(w^{2p+1} + \frac 1 {w^{2p+1}}\Big),
\end{align}
where $w$ is a global affine coordinate on the double cover of $\mathbb{C}\mathrm{P}^1$ given by $z = \frac 12 (w+\frac 1w)$. It produces a system of symmetric $n$- differentials $\omega^{(g)}_n= \omega^{(g)}_n(z_{\set{n}})$, $g\geq 0$, $n\geq 1$. In particular, 
\begin{align}
	\omega^{(0)}_1(z_1) & = y(z_1) dx(z_1),
	&\omega^{(0)}_2(z_1,z_2) &= B(z_1,z_2),
\end{align}
and for all $2g-2+n>0$ the differentials $\omega^{(g)}_n$ have poles only at the divisors $z_i = p$, $i=1,\dots,n$, for the points $p\in \mathbb{C}\mathrm{P}^1$ such that $dx|_p = 0$.

The main object of study is a system of Laurent monomials in $u$ denoted by $A^{(g)}_n(k_1,\dots,k_n)$, $g\geq 0$, $n\geq 1$, $k_1,\dots,k_n=1,\dots,p$ and defined as
\begin{align} \label{eq:A-definiton}
	A^{(g)}_n(k_1,\dots,k_n) & \coloneqq u^{-\sum_{i=1}^n (k_i+1)} \frac{\partial^n F^{(g)}}{\prod_{i=1}^n\partial \tau_{k_i}} \bigg|_{\tau_1=-\frac 12, \tau_2=\cdots=\tau_p=0},
\end{align}
where $F^{(g)}$ is given in terms of the expansions of $\omega^{(g)}_n$ at $z=0$ (or, more conveniently, at $w=0$) as 
\begin{align} \label{eq:DefFg}
	F^{(g)}(t_1,\dots,t_p) 
	& = \sum_{n=1}^\infty \frac{(-2)^n}{n!} \sum_{k_1,\dots,k_n=1}^p \prod_{i=1}^n \Bigg( (t_{k_i}-t_{k_i}^o) \Res_{w_i=0}\frac{x(w_i)^{p-k_i+\frac 12}}{2(p-k_i+\frac 12)} \Bigg) \omega^{(g)}_n
	\\ \notag 
	& = \sum_{n=1}^\infty \frac{(-1)^n}{n!} \sum_{k_1,\dots,k_n=1}^p \prod_{i=1}^n \Bigg( (t_{k_i}-t_{k_i}^o) \Res_{w_i=0}\frac{\Big(u\big(w_i^2+\frac 1{w_i^2}\big)\Big)^{p-k_i+\frac 12}}{(p-k_i+\frac 12)} \Bigg) \omega^{(g)}_n
\end{align}
(the factor of $(-2)^n$ is missed in~\cite{artemev}, but it ought to be there, cf.~\cite[Eq.~(3.5)]{artemev}).

The relation between the variables $\{t_k\}$ and $\{\tau_k\}$, where $k=1,2,\dots,p$, is given by
\begin{align} \label{eq:t-tau-relation}
		t_l & = (2p+1) u^{l+1} \sum_{n=1}^\infty \sum_{\substack {m_1,\dots,m_n\geq 1 \\ \sum_{i=1}^n(m_i+1) = l+1}}\frac{ (2p-2l+2n-3)!!}{n! (2p-2l-1)!!}  \prod_{i=1}^n \tau_{m_i} 
		\\ \notag &
		=(2p+1) u^{l+1} \sum_{n=1}^\infty \frac{(-2)^{n-1}}{n} \binom{-p+l-\frac 12}{n-1} \sum_{\substack {m_1,\dots,m_n\geq 1 \\ \sum_{i=1}^n(m_i+1) = l+1}} \prod_{i=1}^n \tau_{m_i} 
\end{align}
and for the reference point of expansion in $t$-variables we have
\begin{align}
	t_k^o & = -\frac 12 \Res_{w=0} x(w)^{-p+k-\frac 12} y(w)dx(w)
	\\ \notag 
	& = -\frac 12 \Res_{w=0} \bigg(u\Big(w^2+\frac 1{w^2}\Big)\bigg)^{-p+k-\frac 12} \bigg(u^{p+\frac 12}\Big(w^{2p+1}+\frac 1{w^{2p+1}}\Big)\bigg) d\bigg(u\Big(w^2+\frac 1{w^2}\Big)\bigg)
	\\ \notag &
	= {u^{k+1}} \Res_{w=0} \frac{(1+w^{4p+2})(1-w^4)}{(1+w^4)^{p-k+\frac 12}} \frac{dw}{w^{2k+3}}
	= {u^{k+1}} \Res_{w=0} \frac{(1-w^4)}{(1+w^4)^{p-k+\frac 12}} \frac{dw}{w^{2k+3}}
			\\ \notag &
	= {u^{k+1}}\Bigg( \binom{-p+k-\frac 12}{\frac{k+1}2} - \binom{-p+k-\frac 12}{\frac{k-1}2}\Bigg) \qquad \qquad k=1,3,5,\dots 
	\\ \notag &
	= (-1)^{\frac{k+1}{2}}u^{k+1}(2p+1) \frac{(2p-k-2)!!}{(k+1)!!(2p-2k-1)!!}  \qquad \qquad k=1,3,5,\dots,
\end{align}
and $0$ for $k=2,4,6,\dots$, which corresponds to $\tau^o_1=-\frac 12$, $\tau^o_2=\cdots=\tau^o_p=0$ in the $\tau$-variables. Indeed,
\begin{align}
& (2p+1) u^{k+1} \sum_{n=1}^\infty \sum_{\substack {m_1,\dots,m_n\geq 1 \\ \sum_{i=1}^n(m_i+1) = k+1}}\frac{ (2p-2k+2n-3)!!}{n! (2p-2k-1)!!}  \prod_{i=1}^n \tau_{m_i} \bigg|_{\tau_1=-\frac 12, \tau_2=\cdots=\tau_p=0}
\end{align}
is equal to zero for even $k$ and to the expression above for $k=1,3,5,\dots$. 

Using that $\omega^{(g)}_n$ is a monomial in $u$ of degree $-(2g-2+n)(p+\frac 32)$, it is straightforward to check that $A^{(g)}_n(k_1,\dots,k_n)$ is a monomial in $u$ of degree
\begin{align} \label{eq:degree}
	(2-2g)p -(3g-3+n) - \sumk_{\set{n}}.
\end{align}
Let $A^{(g),\mathsf{sing}}_n$ be a singular in $u^2$ part of $A^{(g)}_n$. That is,
\begin{align}
	A^{(g),\mathsf{sing}}_n(k_1,\dots,k_n) \coloneqq \begin{cases}
		0 & \big((2-2g)p -(3g-3+n) - \sumk_{\set{n}}\big) \in 2 \mathbb{Z}_{\geq 0}; \\
		A^{(g)}_n(k_1,\dots,k_n) & \mathrm{otherwise}.
	\end{cases}
\end{align} 


\subsection{The dual side and the conjecture} Apply the $x-y$ swap to the minimal sting spectral curve~\eqref{eq:MinimalStringSpectralCurve}. This means that we run the topological recursion procedure for the curve $\mathbb{C}\mathrm{P}^1$ with the standard Bergman kernel given in a global affine coordinate $z$ by $B(z_1,z_2) = \frac{dz_1dz_2}{(z_1-z_2)^2}$ and with the functions $x^\vee$ and $y^\vee$ given in terms of the Chebyshev polynomials as 
\begin{align} \label{eq:MinimalStringSpectralCurve-dual} 
	x^\vee & = 2 u^{p+\frac 12} T_{p+1}(z) = u^{p+\frac 12} \Big(w^{2p+1} + \frac 1 {w^{2p+1}}\Big),
	& 
	y^\vee & = 2 u T_2(z) = u\Big(w^2 + \frac 1 {w^2}\Big). 
\end{align}
Here $w$ is still a global affine coordinate on the double cover of $\mathbb{C}\mathrm{P}^1$ given by $z = \frac 12 (w+\frac 1w)$.
Denote the resulting differentals by $\omega^{(g),\vee}_n$, $g\geq 0$, $n\geq 1$. In particular, 
\begin{align}
	\omega^{(0),\vee}_1(z_1) & = y^\vee(z_1) dx^\vee(z_1) = x(z_1)dy(z_1),
	&\omega^{(0),\vee}_2(z_1,z_2) &= B(z_1,z_2) = \omega^{(0)}_2(z_1,z_2).
\end{align}

For any $k_1,\dots,k_n=1,\dots,p$ define $A^{(g),\vee}_n(k_1,\dots,k_n)$ as
\begin{align} \label{eq:defA-vee}
	A^{(g),\vee}_n(k_1,\dots,k_n) \coloneqq \prod_{i=1}^n \Bigg( \Res_{w_i=0} \frac{\big(x^\vee(w_i)\big)^{\frac{2p-2k_i+1}{2p+1}}}{2p-2k_i+1} \Bigg) \omega^{(g),\vee}_n .
\end{align}

\begin{conjecture}[{\cite[Sec.~4.1 and 4.2]{artemev}}] \label{conj:art}
	Up to a suitable normalization, $A^{(g),\vee}_n(k_1,\dots,k_n)$ coincides with $A^{(g),\mathsf{sing}}_n(k_1,\dots,k_n)$, $g\geq 0$, $n\geq 1$, $k_1,\dots,k_n=1,\dots,p$. 
\end{conjecture}

Note that $A^{(g),\vee}_n(k_1,\dots,k_n)$ is a monomial in $u$ of the same degree~\eqref{eq:degree} as $A^{(g)}_n(k_1,\dots,k_n)$, so some selection of a singular part of  $A^{(g),\vee}_n$ in $u^2$ might be needed. To this end, we can make a very precise statement, but it needs a slight revision of the minimal string spectral curve side. 

\subsection{Minimal string spectral curve side revisited}  

In the construction of $A^{(g),\mathsf{sing}}_n(k_1,\dots,k_n)$ the residues in the definition of $F^{(g)}$ are taken at $w_i=0$, which are the regular points of $\omega^{(g)}_n$ for $2g-2+n> 0$. However, this definition has to be clarified for $(g,n) = (0,1)$ and $(0,2)$. In the latter two cases the following formulas were used in~\cite{artemev}:
\begin{align} \label{eq:01-case}
	\frac{\partial F^{(0)}}{\partial t_k}\Big|_{t_l = t_l^o,\ l=1,\dots,p} & \coloneqq -\Res_{w=0} 
	\frac{x(w)^{p-k+\frac 12}}{(p-k+\frac 12)} 
	\omega^{(0)}_1 (w)
\end{align}
and 
\begin{align}\label{eq:02-case}
	\frac{\partial^2 F^{(0)}}{\partial t_{k_1} \partial t_{k_2}}\Big|_{t_l = t_l^o,\ l=1,\dots,p} & \coloneqq \Res_{w_1=0} \Res_{w_2=0}
	\frac{x(w_1)^{p-k_1+\frac 12}}{2(p-k_1+\frac 12)} \frac{x(w_2)^{p-k_2+\frac 12}}{2(p-k_2+\frac 12)} \omega^{(0)}_2 (w_1,w_2)
	\\ \notag & 
	= \Res_{z=\infty} \frac{x(z)^{p-k_1+\frac 12}}{(p-k_1+\frac 12)} \frac{\big(dx(z)^{p-k_2+\frac 12}\big)_+}{(p-k_2+\frac 12)},
\end{align}
where $(\alpha)_+$ means that we can the purely singular part of the $\alpha$ at $z\to \infty$ (cf.~e.g.~\cite[Eq.~(9)]{marshakov}).

We remark that the main contributon of these unstable cases consists of regular terms in $u^2$ that has to be ignored after all when we pass from $A^{(g)}_n$ to $A^{(g),\mathsf{sing}}_n$. More precisely, we have the following proposition:

\begin{proposition} \label{prop:sing-stable} Let $F^{(g),\mathsf{st}}$ (here $\mathsf{st}$ stands for \emph{stabilized}) be defined by~\eqref{eq:DefFg} with $\omega^{(0)}_1$ replaced by $0$ and $\omega^{(0)}_2$ replaced by $B(z_1,z_2)-B(w_1,w_2)$. Then, let $A^{(g),\mathsf{st}}_n(k_1,\dots,k_n)$ be defined by~\eqref{eq:A-definiton} with $F^{(g)}$ replaced by $F^{(g),\mathsf{st}}$. Then we have the following alternative formula for $A^{(g),\mathsf{sing}}_n(k_1,\dots,k_n)$ for $n\geq 2$: 
\begin{align}
	A^{(g),\mathsf{sing}}_n(k_1,\dots,k_n) \coloneqq \begin{cases}
		0 & \big((2-2g)p -(3g-3+n) - \sumk_{\set{n}} \big) \in 2 \mathbb{Z}_{\geq 0}; \\
		A^{(g),\mathsf{st}}_n(k_1,\dots,k_n)& \mathrm{otherwise}.
	\end{cases}
\end{align} 
In the case $n=1$ there is an exceptional term that is now omitted by stabilization, that is, $A^{(0),\mathsf{sing}}_1(k_1)$ is nonzero only for $k_1=1$, in which case it is equal to $2(2p+1)u^{2p+1} \frac{1}{p-\frac12}$. 
\end{proposition}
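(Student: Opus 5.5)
The plan is to compare $F^{(g)}$ and $F^{(g),\mathsf{st}}$ term by term. For $g\geq 1$ nothing changes, so the two agree identically. For $g=0$ the only differences are the unstable contributions. The first is the $\omega^{(0)}_1$ term, which is linear in $(t_k-t_k^o)$. The second is the $B(w_1,w_2)$ part of $\omega^{(0)}_2$, which is quadratic in $(t_k-t_k^o)$. This is because $F^{(g)}$ is a polynomial in the $(t_k-t_k^o)$ whose degree-$n$ coefficient is built from $\omega^{(g)}_n$ alone. Hence
\[
F^{(0)}-F^{(0),\mathsf{st}} = \sum_k c_k (t_k-t_k^o) + \sum_{k_1,k_2} c_{k_1k_2}(t_{k_1}-t_{k_1}^o)(t_{k_2}-t_{k_2}^o),
\]
with $c_k$ given by~\eqref{eq:01-case}. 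The $c_{k_1k_2}$ are the double residues at $w_1=w_2=0$ of the $x$-power weights against $B(w_1,w_2)=\frac{dw_1dw_2}{(w_1-w_2)^2}$, i.e.\ the difference between the prescription~\eqref{eq:02-case} and the stabilized one. The constants $c_k$, $c_{k_1k_2}$ are monomials in $u$, and their degrees are forced by homogeneity. I will therefore show that all resulting contributions to $A^{(0)}_n$, after substituting $t=t(\tau)$ via~\eqref{eq:t-tau-relation} and differentiating in $\tau$ at $\tau^o$, land in the regular sector, except for the single $n=1$ term claimed.

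The key step is the chain rule. By~\eqref{eq:t-tau-relation}, $t_l$ is a polynomial in the $\tau$'s whose monomials $\prod\tau_{m_i}$ all satisfy $\sum(m_i+1)=l+1$. Consequently $\partial^n t_l/\partial\tau_{k_1}\cdots\partial\tau_{k_n}$ at $\tau^o$ is a multiple of $u^{l+1}$. It is nonzero only if $l+1-\sum(k_i+1)$ is a nonnegative even integer, since the remaining factors are $\tau_1$'s, each of weight $2$.

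For the linear part, the contribution to $A^{(0)}_n(k_{\set n})$ is $u^{-\sum(k_i+1)}\sum_l c_l\,\partial^n t_l/\partial\tau_{k_{\set n}}$, where $c_l\sim u^{p-l+\frac12}\cdot u^{p+\frac12}=u^{2p+1-l}$. Its $u$-degree is $2p+2-\sum(k_i+1)$. This agrees with~\eqref{eq:degree} at $g=0$, namely $2p+3-n-\sumk_{\set n}$, and the consistency check is immediate. Since $l\le p$, the term is nonzero only when $l+1-\sum(k_i+1)\in 2\mathbb{Z}_{\ge0}$. This gives degree $2p+1-l+(l+1-\sum(k_i+1))$, and I need to show it lies in $2\mathbb{Z}_{\ge0}$ unless $n=1$, $k_1=1$. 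The parity works out directly. Positivity should follow from $l\le p$, and the exceptional case $n=1$, $l=k_1$ gives degree $2p+1-k_1$. Here I need the actual vanishing of $c_l$ for the relevant parity: $c_l=-\Res x^{p-l+\frac12}y\,dx/(p-l+\frac12)$, and I expect it to vanish unless $l$ has a fixed parity, mirroring the computation of $t^o_k$. Then the only odd-degree surviving term is $l=k_1=1$. Computing $c_1$ explicitly gives $2(2p+1)u^{2p+1}/(p-\frac12)$; this is a short residue computation.

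For the quadratic part, I will expand $\Res\Res$ of $x(w_1)^{a}x(w_2)^{b}B(w_1,w_2)$ in $w$. Only even powers of $w$ occur, so the parity constraint again makes the $u$-degree $2p+1-l_1+\ldots$ behave the same way, and every contribution is of even nonnegative degree. I expect the main obstacle to be the careful bookkeeping of parity and positivity here, namely ensuring that no odd-degree term survives for $n\ge2$. I will handle it by noting that $c_{k_1k_2}\propto u^{2p+1-k_1-k_2}$ is nonzero only when $k_1+k_2$ is even, since $x^{a}$ expands in $w^{\pm 2}$ times a common factor.
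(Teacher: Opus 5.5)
Your overall strategy is the paper's. You isolate the linear and quadratic unstable parts of $F^{(0)}-F^{(0),\mathsf{st}}$. You note that the chain rule through~\eqref{eq:t-tau-relation} only adds nonnegative even integers to the $u$-degree. So everything reduces to showing that the coefficients $c_l$ and $c_{l_1l_2}$ live in degrees $2\mathbb{Z}_{>0}$, except for $c_1$.

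However, the two degree and parity computations that carry the proof are wrong as stated, and taken literally they contradict the proposition.

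\textbf{Linear part.} The form $\omega^{(0)}_1=y\,dx$ has $u$-degree $p+\frac32$; you dropped the $u$ in $dx$. So $c_l\propto u^{2p+2-l}$, not $u^{2p+1-l}$.
\begin{itemize}
\item Your ``immediate'' consistency check actually fails: $2p+2-\sum_i(k_i+1)$ is one less than $2p+3-n-\sumk_{\set{n}}$.
\item With your exponent, the term $l=k_1=1$ would have even degree $2p$, so it could not be the exceptional singular term.
\item Also $c_1=\frac{2u^{2p+1}}{p-\frac12}$; the factor $2p+1$ comes from $\partial t_1/\partial\tau_1=(2p+1)u^2$.
\end{itemize}
The correct parity statement comes from
\[
c_l=\frac{2u^{2p+2-l}}{p-l+\frac12}\Res_{w=0}\frac{(1+w^4)^{p-l+\frac12}(1+w^{4p+2})(1-w^4)}{w^{4p-2l+5}}\,dw .
\]
The summand $1$ of $1+w^{4p+2}$ contributes only if $4\mid 4p-2l+4$, i.e.\ $l$ even, which gives degree $2p+2-l\in 2\mathbb{Z}_{>0}$. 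The summand $w^{4p+2}$ contributes only for $l=1$. This is the opposite parity to the one in $t^o_k$, so ``mirroring'' that computation points the wrong way. The fact you need is $c_l=0$ for odd $l\geq 3$, which is also what makes $A^{(0)}_1(k)$ vanish for odd $k\geq 3$.

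\textbf{Quadratic part.} The condition is $k_1+k_2$ \emph{odd}, not even.
\begin{itemize}
\item Each $x(w_i)^{p-k_i+\frac12}$ equals $w_i^{-(2p-2k_i+1)}$ times a series in $w_i^4$.
\item Every term of the expansion of $B(w_1,w_2)$, in either sector, has total degree $-2$ in $(w_1,w_2)$.
\item A nonzero double residue also needs total degree $-2$. So $-4p+2k_1+2k_2-4+4N=-2$ for some $N\geq 0$, i.e.\ $2p+1-k_1-k_2=2N$.
\end{itemize}
Hence $k_1+k_2$ is odd, and the exponent of $u$ is even; since $k_1+k_2\leq 2p$, it is also positive. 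With your parity, $u^{2p+1-k_1-k_2}$ would have odd degree, and the quadratic correction would survive in $A^{(g),\mathsf{sing}}_n$ for $n\geq 2$. So this step fails as proposed.

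With both corrections, your chain-rule argument goes through and is essentially the paper's proof. Your version is actually more explicit than the paper's about why the change of variables shifts degrees only by nonnegative even amounts.
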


\begin{proof} First, note that the shift of the degree in $u$ in the definition of $A^{(g)}_n$ \eqref{eq:A-definiton} is compensated by the shift of degree in $u$ in the change of variables~\eqref{eq:t-tau-relation}. Thus, the non-negative even degrees of $u$ in $A^{(g)}_n$ are coming from the non-negative even degrees of $u$ in $F^{(g)}$ and vice versa. 
Then notice that 
\begin{align}
\frac{\partial^n F^{(g)}}{\partial t_{k_1}\cdots  \partial t_{k_n}}\Big|_{t_l = t_l^o,\ l=1,\dots,p}
\end{align}
is a monomial in $u$ of degree $D = (2-2g)p -(3g-3+n) - \sumk_{\set{n}}$, as in~\eqref{eq:degree}.

For $n=1$, we expand~\eqref{eq:01-case} as
\begin{align} \label{eq:01-case-expanded}
	2\frac{u^{2p+2-k}}{p-k+\frac 12} \Res_{w=0} 
	\frac{(1+w^4)^{p-k+\frac 12}(1+w^{4p+2})(1-w^4)}{w^{4p-2k+4}} \frac {dw}w,
\end{align}
and we see that the factor of $w^{4p+2}$ contributes only for $k=1$, which is exactly the special case in the statement of the lemma. Otherwise, if $k\not=1$, $2p-2k+n$ must be a multiple of $4$, hence $k$ must be even, and thus the degree of $u$ equal to $2p+2-k$ is a positive even number. 

For $n=2$, we notice that $B(z_1,z_2) = B(w_1,w_2)- dw_1dw_2/(1-w_1w_2)^2$ and expand~\eqref{eq:02-case} as
\begin{align}\label{eq:02-case-expanded}
	& 
	\frac{u^{2p+1-k_1-k_2}}{(p-k_1+\frac 12)(p-k_2+\frac12)}\Res_{w_1=0} \Res_{w_2=0} \frac{(1+w_1^4)^{{p-k_1+\frac 12}}(1+w_2^4)^{{p-k_2+\frac 12}}}{w_1^{2p-2k_1+1}w_2^{2p-2k_2+1}} \sum_{d=0}^\infty (d+1) w_1^dw_2^{-d-2} dw_1dw_2
	\\ \notag & 
	-
	\frac{u^{2p+1-k_1-k_2}}{(p-k_1+\frac 12)(p-k_2+\frac 12)}\Res_{w_1=0} \Res_{w_2=0} \frac{(1+w_1^4)^{{p-k_1+\frac 12}}(1+w_2^4)^{{p-k_2+\frac 12}}}{w_1^{2p-2k_1+1}w_2^{2p-2k_2+1}} \sum_{d=0}^\infty (d+1) w_1^dw_2^{d} dw_1dw_2.
\end{align}
Note that in the first line the non-trivial residue is only possible if $4p-2k_1-2k_2+2$ is a multiple of $4$. Thus $k_1+k_2$ is odd and $2p+1-k_1-k_2$ is even, hence the whole first summand is non-singular in $u^2$.
\end{proof}

\begin{remark} Note that the suggested adjustment of the definition of $F^{(g)}$ that defines $F^{(g),\mathsf{sing}}$ is very natural in the context of KP integrability in the theory of topological recursion. Indeed, when one passes to a KP tau function by expansion in a non-singular point in some local coordinate $w$, 
the replacements $\omega^{(0)}_1\rightsquigarrow 0$ and $\omega^{(0)}_2 \rightsquigarrow \omega^{(0)}_2-B(w_1,w_2)$ are the standard choices, cf.~\cite[Rem.~2.4 and Eqs.~(2.5),~(2.6)]{ABDKS-KPxy}.
\end{remark}

\subsection{Final form of Artemev's conjecture} Now we can formulate a theorem that settles Artemev's conjecture including also possible effects of (stabilized) non-singular terms and suitable renormalization of the involved constants. 

\begin{theorem}\label{thm:main} For any $g\geq 0$, $n\geq 1$, $k_1,\dots,k_n=1,\dots,p$, we have
\begin{align}
	& \frac {1}{2^n(2p+1)^n} A^{(g),\mathsf{st}}_n(k_1,\dots,k_n) + \delta_{g,0}\delta_{n,1} \frac{u^{2p+1}}{p-\frac 12}  \\ \notag & = A^{(g),\vee}_n(k_1,\dots,k_n)  -	\delta_{g,0} \delta_{n,\geq 2} \prod_{i=1}^n \Bigg( \Res_{w_i=0} \frac{2(2p+1)u^{p-k_i+\frac 12}}{2(p-k_i+\frac 12) w^{2p-2k_i+1}}  \Bigg)
	\omega^{(0),\sim}_n,
\end{align}
where the second summand is a polynomial in $u^2$ for each $n\geq 2$. It is expressed in terms of the differentials $\omega^{(0),\sim}_n$ of generalized topological recursion in the sense of~\cite{ABDKS-gen} for the spectral curve data given by $\mathbb{C}P^1$, with the standard Bergman kernel $B=dw_1dw_2/(w_1-w_2)^2$ in the global affine coordinate $w$, the initial differentials
	\begin{align} \label{eq:GenTRspectral-1sttime}
	dx^\sim & = u^{p+\frac 12} d\Big( \frac 1 {w^{2p+1}}\Big),
	& dy^\sim & =  ud\Big(w^2 + \frac 1 {w^2}\Big),
\end{align}
and the maximal possible set of key points.
\end{theorem}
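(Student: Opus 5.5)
We will prove the identity by applying the $x-y$ swap formula of~\cite{alexandrov2022universal} (in the generalized topological recursion form of~\cite{ABDKS-gen}), which expresses the differentials $\omega^{(g),\vee}_n$ of the swapped curve~\eqref{eq:MinimalStringSpectralCurve-dual} in terms of the original $\omega^{(g)}_n$, or equivalently in terms of the KP-type stabilized expansions at $w=0$.

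\textbf{Step 1: rewrite the $t$-side.} We express $A^{(g),\mathsf{st}}_n$ directly through the residues $\Res_{w_i=0} x(w_i)^{p-k_i+\frac12}(\cdot)$ applied to $\omega^{(g)}_n$, with the unstable terms stabilized as in Proposition~\ref{prop:sing-stable}. The change of variables~\eqref{eq:t-tau-relation} has the shape of a composition of power series, $\sum_l t_l\,\xi^{\cdots}=\big(\text{series in }\tau\big)^{-p+l-\frac12}$. The aim is to show that it exactly converts the family of observables $x^{p-k+\frac12}$, $k=1,\dots,p$, at the point $\tau^o$ into the family $(x^\vee)^{\frac{2p-2k+1}{2p+1}}/(2p-2k+1)$ up to the factor $2(2p+1)$ per point. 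Concretely, the $\tau$-derivatives at $\tau_1=-\frac12$ and $\tau_{\geq2}=0$ generate, through the chain rule (Faà di Bruno), sums over set partitions of $\set{n}$ of $t$-derivatives with several $\tau$-insertions merged into one $t_l$. This is precisely the combinatorial structure of the $x-y$ swap formula, in which $\omega^{\vee}$ is a sum over graphs with vertices carrying $\omega^{(g)}_m$ and with several points glued through the operators $\sum_r (-d_x\,\tfrac{1}{dx})^r [\hbar^{2r}]\ldots$ and $e^{\pm\hbar(\ldots) y}$.

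\textbf{Step 2: compute the swap on the observables.} We apply the swap formula with the specific test functions. For the observable $(x^\vee)^{\frac{2p-2k+1}{2p+1}}=y^{\frac{2p-2k+1}{2p+1}}$, integrating by parts in the residue at $w=0$ turns the operators $d_y\frac{1}{dy}$ into powers of $x$ against $dy$. Because $x=u(w^2+w^{-2})$ and $y^{1/(2p+1)}\sim u^{1/2}w^{-1}(1+w^{4p+2})^{1/(2p+1)}$ near $w=0$, the resulting residues reduce to binomial coefficients of the form $\binom{-p+l-\frac12}{n-1}$ together with factors $(-2)^{n-1}/n$. We then match these with~\eqref{eq:t-tau-relation} and with the shift $\tau_1^o=-\frac12$, which comes from the $\omega^{(0)}_1=y\,dx$ exponent. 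This matching is the main computation.

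\textbf{Step 3: identify the correction terms.} The discrepancy comes from the difference between the true curve and its expansion at $w=0$. Near $w=0$, $x^\vee$ is dominated by $u^{p+\frac12}w^{-2p-1}$, while the term $w^{2p+1}$ and the stabilization $B-B(w_1,w_2)$ produce regular contributions. We plan to isolate these as generalized topological recursion with $dx^\sim=u^{p+\frac12}d(w^{-2p-1})$, which gives $\omega^{(0),\sim}_n$, $n\geq2$. Genus-zero purity here follows from degree counting in $u$ and $w$: the $\hbar$-corrections vanish for this pure monomial $x^\sim$. A parity argument like the one in the proof of Proposition~\ref{prop:sing-stable} then shows these terms are polynomial in $u^2$. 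The case $(0,1)$ gives the explicit constant $\frac{u^{2p+1}}{p-\frac12}$, already visible in~\eqref{eq:01-case-expanded}.

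The main obstacle is Step 2: the precise bookkeeping of the combinatorial factors when merging points in the swap formula, and showing that it reproduces~\eqref{eq:t-tau-relation} exactly, including normalizations. We would first test it for $n=1,2$ at genus $0$ and then organize the general case as a generating-function identity in a formal parameter.
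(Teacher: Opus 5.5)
\textbf{There is a genuine gap: the step you defer as ``the main computation'' is essentially the whole proof, and the mechanism you sketch for it is not how it works.} Your outline has the right global shape: expand both sides as sums over set partitions of $\set{n}$, match them, and absorb the remainder into genus-zero generalized topological recursion for $dx^\sim,dy^\sim$.

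In the $x-y$ swap formula the points $w_1,\dots,w_n$ are distinct vertices. Several of them are merged into one argument of $\overline{\omega}^{(g)}_\ell$ not by integrating by parts at a single point, but through trees of unstable edges $B(w_i,w_j)\frac{v_iv_j}{dx_i\,dx_j}$. Their polar expansions $\sum_{d}(d+1)w_i^dw_j^{-d-2}$ transport the pole of an observable from one vertex to another. This requires a fixed order of residues and sector of expansion, since single graphs have diagonal poles.

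Before anything can be matched you need the selection principle that tames the swap formula. The functional $\Res_{w=0}(x^\vee)^{\frac{2p-2k+1}{2p+1}}/(2p-2k+1)$ annihilates everything vanishing to order $\geq 2p$ at $w=0$. Each factor $v_i/dx_i$, together with its $\widehat{\partial_{y_i}}$, raises the order by at least $2p+4$. Every nontrivial term of $\mathcal{S}(\hbar v\partial_x)$ raises it by at least $2p+6$, and the factor $1-w^{4p+2}$ in $dy$ is likewise invisible. This forces genus-zero trees with at most one stable multiedge, with $\mathcal{S}\to1$ and $dy\to\widehat{dy}$.

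The trees with no stable multiedge give $\omega^{(0),\sim}_n$. So the correction comes from the part $B(w_i,w_j)$ removed in $\overline{\omega}^{(0)}_2$, not from the $w^{2p+1}$ term of $x^\vee$. Its polynomiality in $u^2$ follows from counting the total order of zeros ($n-1$ Bergman kernels against $n-2$ net factors of $v/dx$). It does not follow from a vanishing of $\hbar$-corrections for $x^\sim$ or from a parity argument, so your Step~3 asserts its conclusions without this mechanism.

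Second, the residues do not reproduce the $t$--$\tau$ coefficients $\binom{-p+l-\frac12}{n-1}$ term by term. Moving the $\widehat{\partial_y}$'s onto the observable at the vertex carrying $\overline{\omega}$ yields factors $(-(2p+1))^{\ell-1}\big(\frac{2k_m}{2p+1}\big)_{\ell-1}$, which are unrelated to those binomials. Your ``observable conversion'' is correct only for singleton blocks; beyond linear order, only resummed principal parts agree.

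On the resonance side, the Fa\`a di Bruno coefficients must be combined with the expansion of $x^{p-l+\frac12}$ and summed over $l\equiv k\pmod 2$. A finite-difference identity collapses this to $\frac{(m-1)_s}{s!}(-2)^{m-2}\big(-p+\sumk_I+s+\frac12\big)_{m-2}\,w^{-(2p-2\sumk_I-2|I|-4s+3)}$ per block $I$ with $m=|I|$ (Lemma~\ref{lem:1stCombLemma}).

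On the dual side, the tree sum over a block has to be evaluated by induction on the number of vertices, cutting at the vertex carrying $\overline{\omega}$ (Lemma~\ref{lem:veevee-tooperator}). That induction closes only thanks to the polynomial identity of Lemma~\ref{lem:2ndCombLemma}, proved via generating series in $X=(1-z)^{-1}$ and backward difference operators.

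None of these three ingredients — the resummation identity, the tree-residue induction, and the polynomial identity — is in your plan, and they carry the content of the theorem.
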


\begin{proof} The proof consists of the computations performed in the next two sections, where we give an explicit computation of the left hand side and the right hand side of this equality in terms of $\omega^{(g)}_\ell$, $1\leq \ell \leq n$. In particular, for $(g,n)\not=(0,1)$ one can directly match the left hand side computed in Prop.~\ref{prop:resonancetr} and the right hand side computed in Prop.~\ref{prop:xydual}, respectively. The case $(g,n)=(0,1)$ is special, and we have to reinstall it by hand on the left hand side as required by Prop.~\ref{prop:sing-stable}, but on the right hand side it is automatically included by the virtue of the $x-y$ swap formula, see Rem.~\ref{rem:01-xydual}.
\end{proof}

\begin{remark} Note that the statement of Thm.~\ref{thm:main} is, on the one hand, more precise than Conj.~\ref{conj:art}, on the other hand it is also slighly weaker than Conj.~\ref{conj:art}. Namely, while we indeed have a very good control and full description of how the polynomial terms in $u^2$ change when we move from the resonance transformation for the minimal string spectral curve to the $x-y$ swap side, it is also suggested in~\cite{artemev} that $A^{(g),\vee}_n$ is purely singular in $u^2$. It is something that our approach doesn't allow to derive. 
\end{remark}

\section{Explicit computation via resonance transformations}

The goal of this secton is to derive an explicit formula for 	$A^{(g),\mathsf{st}}_n(k_1,\dots,k_n)$ in terms of symmetric differentials $\overline{\omega}^{(g)}_n$ defined as 
\begin{align} \label{eq:overline-omega-def}
	\overline{\omega}^{(0)}_1& \coloneqq 0; 
	&\overline{\omega}^{(0)}_2 & \coloneqq {\omega}^{(0)}_2 - B(w_1,w_2)
	 = \frac{dw_1dw_2}{(1-w_1w_2)^2}; 
	& \overline{\omega}^{(g)}_n & \coloneqq {\omega}^{(g)}_n,\   2g-2+n>0. 
\end{align}
Note that all these differentials are holomorphic at $w=0$ in each variable, so the formula
\begin{align} \label{eq:DefFg-st}
		F^{(g),\mathsf{st}}(t_1,\dots,t_p) 
		& = \sum_{n=1}^\infty \frac{(-1)^n}{n!} \sum_{k_1,\dots,k_n=1}^p \prod_{i=1}^n \Bigg( (t_{k_i}-t_{k_i}^o) \Res_{w_i=0}\frac{x(w_i)^{p-k_i+\frac 12}}{(p-k_i+\frac 12)} \Bigg) \overline{\omega}^{(g)}_n
\end{align}
doesn't need any discussion of any exceptonal cases. The main statement in the following:
\begin{proposition} \label{prop:resonancetr} For any $g\geq 0$, $n\geq 1$, $k_1,\dots,k_n=1,\dots,p$ we have:
		\begin{align} \label{eq:prop:resonancetr}
			A^{(g),\mathsf{st}}_n(k_1,\dots,k_n) & = u^{-\sumk_{\set{n}}-n} \sum_{\ell=1}^n \frac{(-2)^\ell (2p+1)^\ell u^{(p+\frac 32)\ell}}{\ell!} \sum_{\substack{I_1\sqcup \cdots \sqcup I_\ell = \set{n} \\ I_1,\dots,I_\ell \not=\emptyset}} \prod_{i=1}^\ell \Bigg(\Res_{w_i=0} \sum_{s_i=0}^{\infty} \frac{(|I_i|-1)_{s_i}}{s_i!} \times
			\\ \notag & \qquad 
			(-2)^{|I_i|-2}\Big(-p+\sumk_{I_i}+s_i+\frac 12\Big)_{|I_i|-2}
			\frac{1}{w_i^{2p-2\sumk_{I_i} -2|I_i|-4s_i+3}} \Bigg) \overline{\omega}^{(g)}_\ell
		\end{align}
\end{proposition}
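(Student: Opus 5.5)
The plan is to differentiate $F^{(g),\mathsf{st}}$ with respect to the $\tau$-variables by the chain rule through the polynomial change of variables~\eqref{eq:t-tau-relation}, and then resum the resulting series block by block. Since $t(\tau)$ is polynomial and $\tau^o=(-\frac12,0,\dots,0)$ maps to $t^o$, the multivariate Faà di Bruno formula gives
\begin{align*}
\frac{\partial^n F^{(g),\mathsf{st}}}{\prod_{i}\partial\tau_{k_i}}\bigg|_{\tau^o}
=\sum_{\ell=1}^n\frac{1}{\ell!}\sum_{\substack{I_1\sqcup\cdots\sqcup I_\ell=\set{n}\\ I_j\neq\emptyset}}\ \sum_{l_1,\dots,l_\ell}
\frac{\partial^\ell F^{(g),\mathsf{st}}}{\partial t_{l_1}\cdots\partial t_{l_\ell}}\bigg|_{t^o}
\prod_{j=1}^\ell \frac{\partial^{|I_j|}t_{l_j}}{\prod_{i\in I_j}\partial\tau_{k_i}}\bigg|_{\tau^o}.
\end{align*}
Here the ordered partitions divided by $\ell!$ account for the unordered set partitions. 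By~\eqref{eq:DefFg-st}, the first factor is
\begin{align*}
(-1)^\ell\prod_j\Res_{w_j=0}\frac{x(w_j)^{p-l_j+\frac12}}{p-l_j+\frac12}\;\overline{\omega}^{(g)}_\ell .
\end{align*}
This is unambiguous because every $\overline{\omega}^{(g)}_\ell$ is holomorphic at $w_j=0$.

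Next I compute the inner derivatives from the second form of~\eqref{eq:t-tau-relation}. Take a block $I$ with $m=|I|$. A monomial $\prod_{i=1}^{N}\tau_{m_i}$ with $N$ factors survives $m$ differentiations in $\tau_{k_I}$ followed by evaluation at $\tau^o$ only if $N=m+s$, where the $s$ extra factors are all $\tau_1$. The weight constraint then forces $l+1=\sumk_I+m+2s$. Counting ordered tuples gives a factor $(m+s)!/s!$, and the evaluation contributes $(-\frac12)^s$. Hence
\begin{align*}
\frac{\partial^{m}t_l}{\partial\tau_{k_I}}\bigg|_{\tau^o}=(2p+1)u^{l+1}\,\frac{(-2)^{m+s-1}}{m+s}\binom{-p+l-\frac12}{m+s-1}\frac{(m+s)!}{s!}\Big(-\frac12\Big)^{s},\qquad l=\sumk_I+m+2s-1 .
\end{align*}
So the sum over $l_j$ becomes a sum over $s_j\ge0$. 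The restriction $l_j\le p$ can be dropped: for $l_j>p$ the factor $x^{p-l_j+\frac12}$ contributes a nonnegative power of $w_j$ times a holomorphic differential, so the residue vanishes.

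Then I substitute
\begin{align*}
x(w)^{p-l+\frac12}=u^{p-l+\frac12}\,w^{-(2p-2l+1)}(1+w^4)^{p-l+\frac12}.
\end{align*}
The powers of $u$ combine to $u^{l+1}u^{p-l+\frac12}=u^{p+\frac32}$ per block, and together with the prefactor $u^{-\sumk_{\set n}-n}$ from~\eqref{eq:A-definiton} this gives exactly the $u$-dependence in~\eqref{eq:prop:resonancetr}. The sign $(-1)^\ell$ from $F$, together with the $(2p+1)^\ell$ and the powers of $2$, should assemble into $(-2)^\ell(2p+1)^\ell$ and the block factors $(-2)^{|I_j|-2}$.

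The remaining step, which I expect to be the main obstacle, is resumming within each block. Expanding $(1+w^4)^{p-l+\frac12}$ as a binomial series in $w^4$ shifts the exponent of $w$ by $4r$. I then collect all pairs $(s,r)$ that produce the same total power $w^{-(2p-2\sumk_I-2|I|-4s'+3)}$, where the new index is $s'=s-r$ (note $2p-2l+1=2p-2\sumk_I-2m-4s+3$). The claim is that the resulting finite sum of products of binomial coefficients, together with the $1/(p-l+\frac12)$ factor, collapses to
\begin{align*}
\frac{(m-1)_{s'}}{s'!}\,(-2)^{m-2}\Big(-p+\sumk_I+s'+\tfrac12\Big)_{m-2}.
\end{align*}
I would first try to recast the convolution as a Chu--Vandermonde identity in the variable $a=-p+\sumk_I+\frac12$, rewriting $\binom{a+m+2s-1}{m+s-1}/(a+m+2s-1)$ via Pochhammer symbols. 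If that does not close directly, I would prove it as a polynomial identity in $a$ using generating functions in $w^4$: the sum over $s$ should come from Lagrange inversion of the series $\sum_n\frac{(-2)^{n-1}}{n}\binom{\cdot}{n-1}$ defining $t(\tau)$. I would check the identity first on the cases $m=1$ and $m=2$.
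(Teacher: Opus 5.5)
Your reduction matches the paper's. You use Fa\`a di Bruno through \eqref{eq:t-tau-relation}, and your derivative of $t_l$ at $\tau^o$ is exactly the paper's $(2p+1)u^{l+1}(-2)^{m-1}\frac{(s+m-1)!}{s!}\binom{-p+l-\frac12}{s+m-1}$, since $(-2)^s(-\frac12)^s=1$. You then expand $x^{p-l+\frac12}$ binomially and drop the constraint $l\le p$, both correctly. But the proposal stops at the only nontrivial step. The per-block resummation is merely asserted, with Chu--Vandermonde or Lagrange inversion named as things you would try. That identity is the whole content of the paper's Lemma~\ref{lem:1stCombLemma}, so as written the proof is incomplete.

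There is also an indexing slip. The exponent of $w$ is $-(2p-2\sumk_I-2m+3)+4(s+r)$, so the collected index is $s'=s+r$, not $s-r$; only then is the sum at fixed $s'$ finite, with $0\le s\le s'$. The normalization you leave as ``should assemble'' must also be pinned down. With both $(-2)^{m-1}$ and $\frac{1}{p-l+\frac12}$ included, the block sum equals $2(-2)^{m-2}\frac{(m-1)_{s'}}{s'!}(-p+\sumk_I+s'+\frac12)_{m-2}$. That extra factor $2$ is what combines with $-(2p+1)$ into $(-2)(2p+1)$.

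The missing argument is short once you see that $\frac{1}{p-l+\frac12}$ cancels. Put $c=p-\sumk_I-m+\frac32$, so that $p-l+\frac12=c-2s$ and $-p+l-\frac12=2s-c$. Then
$\binom{2s-c}{s+m-1}=(-1)^{s+m-1}\frac{\Gamma(c-s+m-1)}{\Gamma(s+m)\,\Gamma(c-2s)}$ and $\frac{1}{c-2s}\binom{c-2s}{s'-s}=\frac{\Gamma(c-2s)}{(s'-s)!\,\Gamma(c-s-s'+1)}$.
So $\Gamma(c-2s)$ and $(s+m-1)!$ cancel, and the block sum becomes
\begin{align*}
\frac{2^{m-1}}{s'!}\sum_{s=0}^{s'}(-1)^s\binom{s'}{s}\,(c-s'+1-s)_{s'+m-2},
\end{align*}
which is an $s'$-th finite difference of a polynomial in $s$.

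Now apply the identity $\sum_{a}(-1)^a\binom{s'}{a}(t-a)_N=(N-s'+1)_{s'}\,(t)_{N-s'}$. For $N\ge0$ this is the iterated Pascal rule $\sum_a(-1)^a\binom{s'}{a}\binom{X-a}{N}=\binom{X-s'}{N-s'}$, so your Vandermonde instinct is right. It evaluates the block sum to $\frac{2^{m-1}}{s'!}(m-1)_{s'}(c-s'+1)_{m-2}$. The reflection $(t)_N=(-1)^N(-t-N+1)_N$ then rewrites $(c-s'+1)_{m-2}$ as $(-1)^m(-p+\sumk_I+s'+\frac12)_{m-2}$. This gives exactly the block factor of \eqref{eq:prop:resonancetr}, and it is how the paper proves Lemma~\ref{lem:1stCombLemma}; Lagrange inversion is not needed.
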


\begin{proof} First, by explicit substitution into~\eqref{eq:t-tau-relation} we derive that for $\sumk_{\set{m}}+m-1=k$
\begin{align}
		{\frac{\partial^m t_l}{\partial\tau_{k_1}\cdots\partial\tau_{k_m}}}\Big|_{\tau_l = \tau_l^o,\ l=1,\dots,p}=
			(2p+1)u^{l+1}(-2)^{m-1}\frac{(\frac{l-k}{2}+m-1)!}{(\frac{l-k}{2})!}
			\binom{-p+l-\frac{1}{2}}{\frac{l-k}{2}+m-1} 
\end{align}
if $l - k$ is even, and it is equal to zero otherwise. Then, notice that at $w\to 0$
\begin{align}
	x^{p-l+\frac{1}{2}}(w)&=u^{p-l+\frac{1}{2}}\sum_{j=0}^{\infty}\binom{p-l+\frac{1}{2}}{j}\frac{1}{w^{2p-2l-4j+1}}.
\end{align}
Combining these two equations with the chain rule, we obtain the following formula for 		$A^{(g),\mathsf{st}}_n(k_1,\dots,k_n)$:
\begin{align} \label{eq:Ag-st-intermediate}
	A^{(g),\mathsf{st}}_n(k_1,\dots,k_n) & = u^{-\sumk_{\set{n}}-n}\sum_{\ell=1}^n \frac{1}{\ell!} \sum_{\substack{I_1\sqcup \cdots \sqcup I_\ell = \set{n} \\ I_1,\dots,I_\ell \not=\emptyset}} \prod_{i=1}^\ell \Bigg(\Res_{w_i=0} f(\sumk_{I_i}+|I_i|-1,|I_i|,w_i) \Bigg) \overline{\omega}^{(g)}_\ell,
\end{align}
where
\begin{align} \label{eq:formula-for-f}
	f(k,m,w) & \coloneqq 
 -2(2p+1)u^{p+3/2}\sum_{\substack {l=k \\ 2\,|\,l-k}}^p\sum_{j=0}^{\infty}(-2)^{m-1}\frac{(\frac{l-k}{2}+m-1)!}{(\frac{l-k}{2})!}\binom{-p+l-\frac{1}{2}}{\frac{l-k}{2}+m-1}
 \times \\ \notag & \qquad \qquad \qquad \qquad \qquad \qquad  \frac{1}{2(p-l+\frac{1}{2})}\binom{p-l+\frac{1}{2}}{j}\frac{1}{w^{2p-2l-4j+1}}.
\end{align}
Note that the contributions to $\Res_{w=0}f(k,m,w) \omega(w)$ for a differential $\omega$ holomorphic at $w\to 0$ are nontrivial only for $j\leq \lfloor \frac{p-l}{2} \rfloor$ (this condition selects the principal part, that is, the negative degrees in $w$, of the expansion of $f(k,m,w)$ at $w\to 0$). 

Lemma~\ref{lem:1stCombLemma} below simplifies the expression for the principal part of $f(k,m,w)$, see~\eqref{eq:simplified-f} below. Substituting~\eqref{eq:simplified-f} into~\eqref{eq:Ag-st-intermediate} we obtain the statement of the proposition. 
\end{proof}

\begin{lemma} \label{lem:1stCombLemma} The principal part of $f(k,m,w)$ at $w\to 0$ is equal to the principal part of the following expression:
	\begin{align} \label{eq:simplified-f}
		-2(2p+1)u^{p+3/2}
		\sum_{s=0}^{\infty 
		}\frac{(-1)^m}{s!}2^{m-2}(m-1)_s\left(-p+k+s-m+\frac{3}{2}\right)_{m-2}\frac{1}{w^{2p-2k-4s+1}}.
	\end{align}
\end{lemma}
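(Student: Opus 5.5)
The plan is to compare the two expressions coefficient by coefficient in $w$. In~\eqref{eq:formula-for-f} I reparametrize the summation by $l=k+2a$ with $a\geq 0$, and set $s\coloneqq a+j$. The power of $w$ is then $w^{-(2p-2k-4s+1)}$, which depends only on $s$, and it has the same shape as the power in~\eqref{eq:simplified-f}.

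The first step is to show that the range restriction $l\leq p$ can be dropped in the principal part. A term contributes to the principal part only if $2p-2k-4s+1>0$, which gives $2s\leq p-k$. Since $a\leq s$, this forces $l=k+2a\leq k+2s\leq p$. Hence, in the principal part, the sum over $l$ becomes an unrestricted finite sum over $0\leq a\leq s$.

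For each $s$ it then remains to prove a scalar identity. Put $c\coloneqq p-k+\tfrac12$. The coefficient of $-2(2p+1)u^{p+3/2}w^{-(2p-2k-4s+1)}$ in $f$ is
\begin{align*}
(-2)^{m-1}\sum_{a=0}^{s}\frac{(a+m-1)!}{a!}\binom{-c+2a}{a+m-1}\frac{1}{2(c-2a)}\binom{c-2a}{s-a}.
\end{align*}
I need to show that this equals $\frac{(-1)^m}{s!}2^{m-2}(m-1)_s\,(-c+s-m+2)_{m-2}$.

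To do this, I would first rewrite $\binom{-c+2a}{a+m-1}$ via $\binom{-x}{n}=(-1)^n\binom{x+n-1}{n}$. This gives $(-1)^{a+m-1}\binom{c-a+m-2}{a+m-1}$, which makes the structure of a Rothe--Hagen (Abel-type) convolution visible. The relevant identity is
\begin{align*}
\sum_a \frac{x}{x+az}\binom{x+az}{a}\binom{y-az}{n-a}=\binom{x+y}{n},
\end{align*}
used here with $z=-2$ and with the factor $\frac{1}{c-2a}$ playing the role of $\frac{x}{x+az}$, or a derivative of this identity in the parameter. The product $\frac{(a+m-1)!}{a!}\binom{\cdots}{a+m-1}$ should be recast as a falling factorial times $\binom{\cdot}{a}$. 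The $m$-dependence would then factor out as the Pochhammer symbol $(-c+s-m+2)_{m-2}$, possibly after applying $(m-2)$-fold differentiation of a generating function.

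Concretely, I would encode the sum as the coefficient extraction $[t^{s}]$ of $(1+t)^{c}\,\phi(t)$. Here $\phi(t)=\sum_a(\cdots)\,t^a(1+t)^{-2a}$ is evaluated by Lagrange inversion in the variable $t/(1+t)^2$, and the remaining coefficient is read off. As a sanity check, the cases $m=1$ and $m=2$ should reduce to elementary binomial sums; in particular $m=2$ kills the Pochhammer factor.

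The main obstacle is this hypergeometric identity in full generality in $m$. If the Rothe--Hagen route becomes messy, the fallback is Zeilberger's algorithm, or WZ certification in $a$: one finds a recurrence in $s$ satisfied by both sides and checks the initial value $s=0$. At $s=0$ only $a=0$ contributes, and the identity reduces to
\begin{align*}
(-2)^{m-1}(m-1)!\binom{-c}{m-1}\frac{1}{2c}=(-1)^m2^{m-2}(-c-m+2)_{m-2},
\end{align*}
which is checked directly.
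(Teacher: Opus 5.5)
Your setup is correct and matches the paper's. The substitution $l=k+2a$, $s=a+j$, the remark that $2s\le p-k$ makes $l\le p$ automatic in the principal part, the coefficient sum, the target value and the $s=0$ check are all right. Your reflected binomial $\binom{c-a+m-2}{a+m-1}$ is also the correct one; the paper's displayed binomial has a misprint in the upper index, though its Gamma form is right.

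However, the proposal stops where the content of the lemma begins. The scalar identity is left to a menu of tools (Rothe--Hagen with differentiation in a parameter, Lagrange inversion in $t/(1+t)^2$, Zeilberger/WZ), and none of them is carried out. As written, nothing beyond $s=0$ is proved. The Abel-type structure you are aiming at is also only apparent. Matching $\binom{c-2a}{s-a}$ with $\binom{y-az}{n-a}$ forces $z=2$, $y=c$, $x=-c$, which is the degenerate case $x+y=0$. For $m\ge 2$ you would also have to carry the polynomial weight $(a-c-m+2)_{m-1}$ through the convolution.

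The missing idea is an elementary cancellation. Write the binomials through Gamma functions. Then $(a+m-1)!$ cancels, and $\frac{1}{c-2a}$ cancels exactly against $\Gamma(c-2a+1)/\Gamma(c-2a)$:
\[
\frac{(a+m-1)!}{a!}\binom{c-a+m-2}{a+m-1}\frac{1}{2(c-2a)}\binom{c-2a}{s-a}=\frac{1}{2\,s!}\binom{s}{a}\,(c-s+1-a)_{s+m-2}.
\]
All dependence on $2a$ disappears. The coefficient becomes $\frac{2^{m-2}}{s!}\sum_{a=0}^{s}(-1)^a\binom{s}{a}(c-s+1-a)_{s+m-2}$, which is an $s$-fold backward difference of a Pochhammer symbol.

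Iterating $(x)_N-(x-1)_N=N\,(x)_{N-1}$ gives $\frac{2^{m-2}}{s!}(m-1)_s\,(c-s+1)_{m-2}$. This rule holds for all $N$ with the Gamma definition, so it also covers $m=1$. The reflection $(t)_N=(-1)^N(-t-N+1)_N$ then turns this into your target $\frac{(-1)^m}{s!}2^{m-2}(m-1)_s(-c+s-m+2)_{m-2}$. This is exactly the paper's argument, and it needs no convolution identity, generating function or recurrence.
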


\begin{proof}  We leave out the common factor of $-2(2p+1)u^{p+3/2}$. Consider~\eqref{eq:formula-for-f} with extra condition $j\leq \lfloor \frac{p-l}{2} \rfloor$. Set $l=k+2a$ with $a\in\{0,\dots,\lfloor\frac{p-k}{2}\rfloor\}$. Then the powers of $w$ can be written as $-({2p-2k-4a-4j+1})=-({2p-2k-4s+1})$ with $s=a+j$. Taking $s$ fixed we see $j=s-a$ and then $a$ lies in $\{0,\dots,\min\{s,\lfloor\frac{p-k}{2}\rfloor\}\}$. In our case we always have $\min\{s,\lfloor\frac{p-k}{2}\rfloor\}=s$ as choosing $a>\lfloor\frac{p-k}{2}\rfloor$ gives a positive power of $w$. Set $c=p-k+\frac{1}{2}$. The coefficient of $w^{-(2p-2k-4s+1)}$ is then
	\begin{align}
		&\sum_{a=0}^s(-2)^{m-1}\frac{(a+m-1)!}{a!}\frac{1}{2c-4a}\binom{2a-c}{a+m-1}\binom{c-2a}{s-a}\\ \notag 
		&=\sum_{a=0}^s(-2)^{m-1}\frac{(a+m-1)!}{a!}\frac{1}{2c-4a}(-1)^{a+m-1}\binom{c-a+m-1}{a+m-1}\binom{c-2a}{s-a}\\ \notag 
		&=\sum_{a=0}^s2^{m-1}(-1)^a\frac{1}{2c-4a}\frac{\Gamma(a+m)}{\Gamma(a+1)}\frac{\Gamma(c-a+m-1)}{\Gamma(a+m)\Gamma(c-2a)}\frac{\Gamma(c-2a+1)}{\Gamma(s-a+1)\Gamma(c-a-s+1)}.
	\end{align}
	Now we can use the property $\Gamma(x+1)=x\Gamma(x)$ to write $\frac{1}{2c-4a}\frac{\Gamma(c-2a+1)}{\Gamma(c-2a)}=\frac{c-2a}{2c-4a}=\frac{1}{2}$. The sum thus reduces to
	\begin{align}
		\sum_{a=0}^s \frac{2^{m-2}(-1)^a \Gamma(c-a+m-1)}{\Gamma(a+1)\Gamma(s-a+1)\Gamma(c-a-s+1)}
		&=\frac{1}{s!}\sum_{a=0}^s2^{m-2}(-1)^a\binom{s}{a}\frac{\Gamma(c-a+m-1)}{\Gamma(c-a-s+1)}\\ \notag 
		&=\frac{1}{s!}\sum_{a=0}^s2^{m-2}(-1)^a\binom{s}{a}(c-a-s+1)_{s+m-2}.
	\end{align}
	Now use the finite difference identity $\sum_{a=0}^s(-1)^a\binom{s}{a}(t-a)_N=(N-s+1)_s(t)_{N-s}$ to rewrite this as
	\begin{equation}
		\frac{1}{s!}2^{m-2}(m-1)_s(c-s+1)_{m-2}.
	\end{equation}
	Using the reflection identity $(-1)^N(t)_N=(-t-N+1)_N$ and plugging $c$ in finally gives
	\begin{equation}
		\frac{(-1)^m}{s!}2^{m-2}(m-1)_s\Big(-p+k+s-m+\frac{3}{2}\Big)_{m-2}.
	\end{equation}
\end{proof}

\section{Explicit computation on the dual side}

\subsection{Analysis of the \texorpdfstring{$x-y$}{x-y} swap formula}

Recall the universal formula for the $x-y$ swap in the theory of topological recursion~\cite[Eq.~(1.13)]{alexandrov2022universal} adapted to our conventions. We have:
\begin{align} \label{eq:MainFormulaSimple}
	 \omega_{n}^{(g),\vee} (z_{\set{n}})
	& =
	(-1)^n
	\coeff {\hbar^{2g}} \sum_{\Gamma} \frac{\hbar^{2g(\Gamma)}}{|\mathrm{Aut}(\Gamma)|} \prod_{i=1}^n dy_i
	\sum_{k_i=0}^\infty  (-\partial_{y_i})^{k_i} \coeff {v_i^{k_i}}  \frac{1}{dy_i} \frac{dx_i}{v_i}
	\\ \notag & \qquad \qquad 
	 \exp \bigg( {  \mathcal{S}(\hbar v_i \partial_{x_i}) \sum_{\tilde g=0}^\infty \hbar^{2\tilde g} \frac{v_i}{dx_i}\omega^{(\tilde g)}_{1} (z_i)- \frac{v_i}{dx_i}\omega^{(0)}_{1} (z_i)} \bigg)
	\\ \notag & \qquad \qquad 
	\prod_{e\in E(\Gamma)} \prod_{j=1}^{|e|\geq 2}\restr{(\tilde v_j, \tilde x_j)}{ (v_{e(j)},x_{e(j)})} \mathcal{S}(\hbar \tilde v_j  \partial_{\tilde x_j}) \sum_{\tilde g=0}^\infty \hbar^{2\tilde g}\tilde \omega^{(\tilde g)}_{|e|}(\tilde z_{\llbracket |e|\rrbracket})  \prod_{j=1}^{|e|} \frac{ \tilde v_j}{d\tilde x_j}
	\\ \notag & \quad 
	+\delta_{(g,n),(0,1)} x_1dy_1.
\end{align}
Here the following conventions are used:
\begin{itemize}
	\item We use notation $x_i \coloneqq x(z_i)$, $\tilde x_i \coloneqq x(\tilde z_i)$, $y_i\coloneqq y(z_i)$.
	\item The sum is taken over all connected graphs $\Gamma$ with $n$ labeled vertices (labelled by $i\in \set{n}$) and multiedges of index $\geq 2$. 
	By $g(\Gamma)$ we denote the first Betti number of $\Gamma$, and $|\mathrm{Aut}(\Gamma)|$ stands for the number of automorphisms of $\Gamma$.
	\item For convenience, we also label all legs of every given multiedge $e$ from $1$ to $|e|$ in an arbitrary way. This is not a part of the data of the graph and these labels don't affect $|\mathrm{Aut}(\Gamma)|$.
	\item For a multiedge $e$ with index $|e|$  we control its attachment to the vertices by the associated map $e\colon \llbracket |e| \rrbracket \to 
	\llbracket n \rrbracket
	$ that we denote also by $e$, abusing notation. 
	\item For a multiedge $e$ with $|e|=2$ we define $\tilde \omega^{(0)}_{2} :=  \omega^{(0)}_{2} - \frac{d\tilde x_1d\tilde x_2}{(\tilde x_1-\tilde x_2)^2}$ if $e(1)=e(2)$, and $\tilde \omega^{(0)}_{2} :=  \omega^{(0)}_{2}$ otherwise. For all $(g,n)\not=(0,2)$ we simply have $\tilde \omega^{(g)}_{n} :=  \omega^{(g)}_{n}$.
	\item By $\coeff{\hbar^{2g}}$ (respectively, $\coeff{v_i^{p_i}}$) we denote the operator that extracts the corresponding coefficient from the whole expression to the right of it.
	\item By $\restr{a}{b}$ we denote the operator of substitution, $\restr{a}{b}f(a)\coloneqq f(b)$ for any function $f$.
	\item The function $\mathcal{S}(\zeta)$ is defined as $\mathcal{S}(\zeta)\coloneqq \frac {e^{\zeta/2}-e^{-\zeta/2}} \zeta$.
\end{itemize}
It will be convenient for the argument below to apply a  slight reshuffeling of this formula, namely, for a multiedge $e$ of index $2$ with $e(1)\not= e(2)$ we split the associated 2-differential as 
\begin{align} 
\mathcal{S}(\hbar \tilde v_1 \partial_{\tilde x_1}) 
 \mathcal{S}(\hbar \tilde v_2 \partial_{\tilde x_2})\frac{\tilde v_1\tilde v_2\big(\sum_{\tilde g=0}^\infty \hbar^{2\tilde g}\tilde \omega^{(\tilde g)}_{2}(\tilde z_1, \tilde z_2) -  B(\tilde w_1,\tilde w_2) \big)}{\tilde dx_1\tilde dx_2}  + \mathcal{S}(\hbar \tilde v_1 \partial_{\tilde x_1}) \mathcal{S}(\hbar \tilde v_2 \partial_{\tilde x_2}) \frac{\tilde v_1\tilde v_2 B(\tilde w_1,\tilde w_2)}{\tilde dx_1\tilde dx_2},
\end{align}
and thus we get two types of 2-multiedges: the ``stable'' ones labeled by the first summand and the ``unstable'' ones labeled by $\mathcal{S}(\hbar \tilde v_1 \partial_{\tilde x_1}) \mathcal{S}(\hbar \tilde v_2 \partial_{\tilde x_2}) \frac{\tilde v_1\tilde v_2 B(\tilde w_1,\tilde w_2)}{\tilde dx_1\tilde dx_2}$. 
 It is also convenient to regard the terms in the exponent in the second line of~\eqref{eq:MainFormulaSimple} as multiedges of index $1$, where we also split the labels into the ``stable'' ones  $\mathcal{S}(\hbar v_i \partial_{x_i}) \sum_{\tilde g=1}^\infty \hbar^{2\tilde g} \frac{v_i}{dx_i}\omega^{(\tilde g)}_{1} (z_i)$ and ``unstable'' ones $(\mathcal{S}(\hbar v_i \partial_{x_i})-1)  \frac{v_i}{dx_i}\omega^{(\tilde 0)}_{1} (z_i)$. The multiedges of index $\geq 3$ we by default consider to be ``stable''.

For our purpose of computation of  $A^{(g),\vee}_n$, define $\omega^{(g),\vee\!\!\!\vee}_n$ as a severly reduced version of $\omega^{(g),\vee}_n$:
\begin{align} \label{eq:omega-vee-one-stable}
	\omega_{n}^{(g),\vee\!\!\!\vee} (w_{\set{n}})
	& \coloneqq 
	(-1)^n \sum_{\substack {\Gamma:\,g(\Gamma)=0 \\ E(\Gamma) = \{e_o\}\sqcup E'(\Gamma)}} \prod_{i=1}^n \widehat{dy}_i
	\sum_{k_i=0}^\infty  (-\widehat{\partial_{y_i}})^{k_i} \coeff {v_i^{k_i}}  \frac{1}{\widehat{dy}_i} \frac{{dx}_i}{v_i} \times
	\\ \notag &  \qquad 
	\overline{\omega}^{(g)}_{|e_o|}(w_{e_o(1)},\dots,w_{e_o(|e_o|)}) \prod_{j=1}^{|e_o|} \frac{v_{e_o(j)}}{{dx}_{e_o(j)}}
	\prod_{e\in E'(\Gamma)} B(w_{e(1)},w_{e(2)})  \frac{ v_{e(1)}}{{dx}_{e(1)}}\frac{ v_{e(2)}}{{dx}_{e(2)}}.
\end{align}
Here the following new additional conventions are used:
\begin{itemize}
	\item The sum is taken over all connected graphs $\Gamma$ of genus $0$ with $n$ labeled vertices. These graphs have no automorphisms.
	There is one distinguished multiedge $e_o$ that might be of any index $|e_o|\geq 1$ and all other multiedges are of index $2$. Note that every multiedge of index $\ell$ is attached to $\ell$ different vertices. 
	\item The differentials assigned by multiedges are defined in~\eqref{eq:overline-omega-def}. 
	\item The differential $\widehat{dy}$ and the corresponding derivative are defined as
	\begin{align}\label{eq:hat-operators}
		\widehat{dy} & = - \frac{(2p+1)u^{p+\frac 12}dw}{w^{2p+2}} ; & \widehat{\partial_y}  & = - \frac{w^{2p+2}}{(2p+1)u^{p+\frac 12}}\partial_w.
	\end{align}
\end{itemize}

Recall the definition of $A^{(g),\vee}_n(k_1,\dots,k_n)$ in~\eqref{eq:defA-vee}. We have the following lemma:
\begin{lemma} \label{lem:vee-to-veevee} For any $g\geq 0$, $n\geq 1$, $(g,n)\not=(0,1)$, $k_1,\dots,k_n = 1,\dots,p$, we have:
\begin{align} \label{eq:simplification-of-graphs}
 \prod_{i=1}^n \Bigg( \Res_{w_i=0} \frac{\big(x^\vee(w_i)\big)^{\frac{2p-2k_i+1}{2p+1}}}{2p-2k_i+1} \Bigg) \omega^{(g),\vee}_n 
= \prod_{i=1}^n \Bigg( \Res_{w_i=0} \frac{u^{p-k_i+\frac 12}}{2(p-k_i+\frac 12) w^{2p-2k_i+1}}  \Bigg)
\omega^{(g),\vee\!\!\!\vee}_n
\end{align} 
up to an extra term of even positive degree in $u$ that appears only for $g=0$ and whose explicit description is given below, in Rem.~\eqref{rem:sim-explanation}. 
\end{lemma}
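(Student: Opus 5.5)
We plan to start from the $x-y$ swap formula~\eqref{eq:MainFormulaSimple}, applied with the roles of the curves exchanged. The original curve $(x,y)$ is the input and $(x^\vee,y^\vee)=(y,x)$ is the output, so the derivatives $\partial_{y_i}$, $dy_i$ appearing in the formula are taken with respect to $y=u^{p+\frac12}(w^{2p+1}+w^{-2p-1})$. We then take the residues at $w_i=0$ against the test functions $\big(x^\vee(w_i)\big)^{\frac{2p-2k_i+1}{2p+1}}/(2p-2k_i+1)$ and discard every contribution that cannot survive.

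\textbf{Step 1: residue bookkeeping.} The argument rests on two elementary observations.

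First, near $w=0$ we have $x^\vee=u^{p+\frac12}w^{-2p-1}(1+w^{4p+2})$. Hence
\begin{align*}
\big(x^\vee\big)^{\frac{2p-2k+1}{2p+1}}=u^{p-k+\frac12}w^{-(2p-2k+1)}\big(1+O(w^{4p+2})\big).
\end{align*}
The correction terms have high positive order in $w$.

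Second, $dy=\widehat{dy}\,(1-w^{4p+2})$ and $\partial_y=(1-w^{4p+2})^{-1}\widehat{\partial_y}$. So replacing $dy,\partial_y$ by the hatted operators~\eqref{eq:hat-operators} also changes things only at relative order $w^{4p+2}$.

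We then have to check that these high-order corrections cannot pair with the pole $w^{-(2p-2k_i+1)}$, whose order is at most $2p-1$. To do so, we look at what the operator $\widehat{dy}\,(-\widehat{\partial_y})^{k}\,\widehat{dy}^{-1}dx/v$ acting on the graph contributions produces: poles at $w=0$, each coming with a factor $w^{2p+2}$ per derivative. Tracking the $w$-orders should show that any term carrying an extra $w^{4p+2}$ either has vanishing residue or is a monomial of even non-negative degree in $u$. The latter kind of term is exactly what the lemma allows to be discarded.

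\textbf{Step 2: pruning the graph sum.} Next we show that, after taking residues at $w_i=0$, the only surviving structures are the ones listed in~\eqref{eq:omega-vee-one-stable}.

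\emph{Operators $\mathcal S$.} The operators $\mathcal{S}(\hbar v\partial_x)-1$ and the $\hbar$-corrections carry additional powers of $v\partial_x$. Since $dx$ is regular and $1/dx$ has a high pole at $w=0$ of order $w^{3}$ relative to $dw$, each such factor raises the order of vanishing, or at least the $u$-degree, in a way we will make explicit by degree counting. The plan is to combine the $u$-homogeneity, namely that $\omega^{(g)}_n$ has degree $-(2g-2+n)(p+\frac32)$, with the $w$-order at $0$. From this we aim to show that these contributions are regular at $w=0$ after the $v$-coefficient extraction, so they give zero residue against a pole of order $\le 2p-1$.

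\emph{Genus of the graph.} Graphs with $g(\Gamma)>0$, or with two or more stable multiedges, should likewise produce too high a $w$-order at $0$. The reason is that each extra $v/dx$ contributes a factor $w^{3}$, and these factors are not compensated by enough derivatives $\widehat{\partial_y}$.

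\emph{What remains.} The surviving graphs are trees with a single stable multiedge $e_o$ and unstable $2$-edges. An unstable edge with $e(1)=e(2)$, or the subtracted $dx\,dx/(x-x)^2$ term, is absorbed into $B(w_1,w_2)$ versus $\overline\omega^{(0)}_2$ via $B(z_1,z_2)=B(w_1,w_2)-dw_1dw_2/(1-w_1w_2)^2$. This gives~\eqref{eq:omega-vee-one-stable}.

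\textbf{Step 3: the $g=0$ exception, and the main obstacle.} For $g=0$, the trees consisting only of unstable edges ($B(w_1,w_2)$ and the unstable $1$-edges coming from $\omega^{(0)}_1$) are not removed by the argument above. These are precisely the extra term described in Rem.~\ref{rem:sim-explanation}. We plan to identify them with generalized topological recursion for the data~\eqref{eq:GenTRspectral-1sttime}, and to check that their $u$-degree is even and non-negative using~\eqref{eq:degree}.

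We expect the main obstacle to be the uniform order-counting in Step 2. One must show that every stable contribution involving $\hbar$-corrections or loops has a $w$-order at $0$ too large to pair with $w^{-(2p-2k_i+1)}$. This requires careful control of how $\widehat{\partial_y}^{k}$ interacts with the extraction of the coefficient of $v^{k}$. We would first attempt this with a weighted grading in which $w$ has weight $1$ and $v$ has weight $-(2p+2)$, and check that all discarded terms have positive total weight.
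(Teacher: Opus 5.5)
Your overall plan matches the paper's: prune the graph sum of \eqref{eq:MainFormulaSimple} by counting orders of vanishing at $w_i=0$, replace $dy,\partial_y$ by the hatted operators, keep genus-zero graphs with one stable multiedge plus Bergman edges, and identify the remaining graphs with $\omega^{(0),\sim}_n$. However, the quantitative mechanism you give for Step 2, which you yourself flag as the crux, is wrong. As stated, the pruning would not go through.

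The missing point is that every power of $v_i$ is traded, via $\coeff{v_i^{k_i}}$ and $(-\widehat{\partial_{y_i}})^{k_i}$, for one derivative $\widehat{\partial_{y_i}}\propto w_i^{2p+2}\partial_{w_i}$. Each such derivative raises the order of zero by $2p+1$ or kills the term. Combined with $1/dx_i\propto w_i^3/(1-w_i^4)$, this gives:
\begin{itemize}
\item each net factor $v_i/dx_i$ raises the order by at least $2p+4$;
\item each $v_i\partial_{x_i}$ inside $\mathcal S$ raises it by at least $2p+3$.
\end{itemize}
(Note that $dx$ has a pole of order $3$; it is not regular.) This is what beats the test function. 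Its pole $w^{-(2p-2k+1)}$ has order at most $2p-1$, so it annihilates every differential vanishing to order $\ge 2p-1$.

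Your version goes wrong at several points:
\begin{itemize}
\item You count only $w^3$ per $v/dx$ and speak of derivatives that might ``compensate''. They never lower the order, and $w^3$ alone cannot beat a pole of order up to $2p-1$.
\item Your criterion ``regular at $w=0$, hence zero residue'' is false. For instance, $\overline\omega^{(g)}_\ell$ is regular and pairs nontrivially with $w^{-(2p-2k+1)}$.
\item Your proposed grading gives $v$ the wrong sign. Consistency with $\widehat{\partial_y}^k\coeff{v^k}$ forces $v$ to count as $w^{2p+1}$, not $w^{-(2p+2)}$.
\item $u$-homogeneity cannot help. Every graph in \eqref{eq:MainFormulaSimple} contributes a monomial of the same $u$-degree $-(2g-2+n)(p+\frac32)$.
\end{itemize}

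The second missing ingredient is the role of the unstable Bergman edges. Individual graphs have diagonal poles, so you must fix the order of the residues and expand $B(w_i,w_j)=\sum_{d\ge0}(d+1)w_i^dw_j^{-d-2}dw_idw_j$ in the corresponding sector. A vertex of valency $\ge2$ carries at least one net factor $v_i/dx_i$. It therefore survives only if one of its $2$-edges supplies such a pole. This is what leaves only genus-zero graphs with at most one stable multiedge.

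Self-loops $e(1)=e(2)$ belong to $g(\Gamma)\ge1$ and are discarded. They are not absorbed into $\overline\omega^{(0)}_2$; the splitting $\omega^{(0)}_2=\overline\omega^{(0)}_2+B(w_1,w_2)$ concerns only edges with $e(1)\ne e(2)$.

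You also misread what the lemma allows. For $g\ge1$ the equality is exact, and for $g=0$ the only discrepancy is the specific term built from $\omega^{(0),\sim}_n$. So the $w^{4p+2}$-corrections from Step 1 must vanish outright; they cannot survive as ``even-degree monomials''.

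For the extra term, \eqref{eq:degree} gives the degree $2p+3-n-\sumk_{\set n}$ but nothing about its sign or parity. What you need is the total-order count for a tree with $n-1$ Bergman kernels and $n-2$ net factors $v/dx$:
$2pn-2\sumk_{\set n}=-2(n-1)+(n-2)(2p+4)+4s$.
This forces $2p+3-n-\sumk_{\set n}=2s\in2\mathbb Z_{\ge0}$.

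Finally, the unstable $1$-edges $(\mathcal S-1)\frac{v}{dx}\omega^{(0)}_1$ are $O(\hbar^2)$. They do not occur at $g=0$, so they do not belong in that extra term.
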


\begin{proof} Note that ~\eqref{eq:MainFormulaSimple} defines for $2g-2+n>0$ an $n$-differential that has no poles on the diagonals $z_i=z_j$, despite the fact that individual summand do have them. This means that the right way to understand~\eqref{eq:simplification-of-graphs} is to fix the consecutive order of the residues $\prod_{i=1}^n \Res_{w_i=0} = \Res_{w_n=0} \cdots \Res_{w_1=0}$ and once we apply them in this order to the individual decorated graphs, fix the expansions of the singular terms in the sector $|w_1|<\cdots < |w_n|$. 
	
We select the contributing graphs in~\eqref{eq:MainFormulaSimple} using rough estimation of the orders of zeros at $w_i=0$, $i=1,\dots,n$. Note that 
\begin{align}
	{dx} & = - \frac{2u (1-w^4)dw}{w^3} ; & \partial_x  & = - \frac{w^3}{2u (1-w^4)} \partial_w; \\
	{dy} & = - \frac{(2p+1)u^{p+\frac 12} (1-w^{4p+2})dw}{w^{2p+2}} ; & \partial_y  & = - \frac{w^{2p+2}}{(2p+1)u^{p+\frac 12} (1-w^{4p+2})}\partial_w.
\end{align}
This means that each occurence of $v_i/dx_i$ increases the order of zero in $w_i$ at $w_i=0$ by at least $(2p+4)$. Also, each occurence of $v_i\partial_{x_i}$ that enter the $\mathcal{S}$-series increases the order of zero in $w_i$ at $w_i=0$ by at least $(2p+3)$. Note that the order of zero in $w_i$ can be negative if we have a pole --- this might be the case if we have multiedges of index $2$ attached to the $i$-th vertex as they contain the suitably expanded Bergman kernel.
Moreover, 
\begin{align}
\omega^{(0)}_1(w) =  -2	u^{p+\frac 32} \frac {(1+w^{4p+2})(1-w^4)}{w^{2p+4}} dw,
\end{align}
but it enters the formula with $(\mathcal{S}(\hbar v\partial_x)-1)$ applied to it, which shifts the order of zero from $-(2p+4)$ to  at lease $2p+2$.
 On the other hand, for $k=1,\dots,p$ the operator
\begin{align} \label{eq:res-operator}
\Res_{w=0} \frac{\big(x^\vee(w)\big)^{\frac{2p-2k+1}{2p+1}}}{2p-2k+1} = \Res_{w=0} \frac{u^{p-k+\frac 12}(1+w^{4p+2})^{\frac{2p-2k+1}{2p+1}}}{2(p-k+\frac 12) w^{2p-2k+1}}
\end{align}
vanishes on differentials with the order of zero at $w=0$ greater or equal to $2p$. Final observation that we need is that the term 
\begin{align}
	B(w_i,w_j)= dw_1dw_2 \sum_{d=0}^\infty \frac{w_i^d}{w_j^{d+2}} (d+1)
\end{align}
increases the order of zero by $d$ at $w_i=0$ simultaneously decreasing the order of zero by $d+2$ at $w_j=0$. 

Since the expression for vertex $i$ of index $p$ in~\eqref{eq:MainFormulaSimple} has at least $p-1$ factors $\frac{v_i}{dx_i}$, it should be either of index $1$ or at least one of the attached multiedges should be a ``polar'' contribution from an unstable $2$-multiedge. In fact, the ``polar'' contribution in combination with the extra factor of $\frac{v_i}{dx_i}$ that comes with it still increases the degree of zero at $w_i=0$, but this shift can be less than $2p$, hence the residue operator can still be nontrivial. Note also that $\mathcal{S}(\hbar v \partial_x) = 1+O(\hbar v \partial_x)^2$, that is, its non-trivial part increases the order of zero by at least $2p+6$. Hence, all instances of $\mathcal{S}(\hbar v \partial_x)$ must be replaced just by $1$ (its leading term), which in particular kills the unstable multiedges of index $1$. The same holds for the factor $(1-w^{4p+2})$ in the formulas for $dy$ and $\partial_y$, hence we replace them by $\widehat{dy}$ and $\widehat{\partial_y}$.

This implies that we have non-trivial contribution only from the graphs of total genus $0$, with at most one stable multiedge (hence decorated by $\overline{\omega}^{(g)}_{|e_o|}(z_{e_o(1)},\dots,z_{e_o(|e_o|)}) \prod_{j=1}^{|e_o|} \frac{v_{e_o(j)}}{d x_{e_o(j)}}$), and with any number of unstable multiedges of index $2$ decorated by $B(w_{e(1)},w_{e(2)})  \frac{ v_{e(1)}}{d x_{e(1)}}\frac{ v_{e(2)}}{d x_{e(2)}}$. Note also that the terms of degree $4p+2$ and higher in the numerator of~\eqref{eq:res-operator} increase the order of zero beyond the order of the pole and thus don't contribute. Thus we obtain the sum of the right hand side of~\eqref{eq:simplification-of-graphs} (in the case there is exactly one stable multiedge) and the following term that represents the case of no stable multiedges at all:
\begin{align} \label{eq:extraterm}
	 \prod_{i=1}^n \Bigg( \Res_{w_i=0} \frac{u^{p-k_i+\frac 12}}{2(p-k_i+\frac 12) w^{2p-2k_i+1}}  \Bigg)
	\big(\omega^{(0),\sim}_n	+\delta_{n,1} x_1dy_1\big),
\end{align} 
where
\begin{align} \label{eq:omega-sim}
	\omega_{n}^{(0),\sim} (w_{\set{n}})
	& \coloneqq 
	(-1)^n \sum_{\substack {\Gamma:\,g(\Gamma)=0}} \prod_{i=1}^n \widehat{dy}_i
	\sum_{k_i=0}^\infty  (-\widehat{\partial_{y_i}})^{k_i} \coeff {v_i^{k_i}}   \frac{dx_i}{v_i \widehat{dy}_i}
\prod_{e\in E(\Gamma)} \frac{v_{e(1)}v_{e(2)}B(w_{e(1)},w_{e(2)})}{d x_{e(1)}d x_{e(2)}} 
\end{align}
 The second summand corresponds to the exceptional $(0,1)$ term and it is not a part of the statement of the lemma (see Rem.~\ref{rem:01-xydual} below). As for the first summand, the rough count of the orders of zero at $w_i=0$, $i=1,\dots,n$, gives the following. There are $n-1$ Bergman kernels, and each of them adds $-2$ to the total sum of orders of zeros. There are also $n-2$ extra factors of $v/dx$, and each of them adds $2p+4+4s$ to the total sum of orders of zeros, for some $s\in \mathbb{Z}_{\geq 0}$. On the other hand, we apply the product of the operators~\eqref{eq:res-operator}, and this implies that the total sum of orders of zeros must be $2pn-2\sumk_{\set{n}}$. Thus we get the following equation for possibly non-trivial contributions:
 \begin{align}
 	2pn-2\sumk_{\set{n}} = -2(n-1)+(n-2)(2p+4)+4s,
 \end{align}
 which implies that $\sumk_{\set{n}}=2p-n+3=2s$ and the corresponding degree of $u$ is equal to $2p+3-n-\sumk_{\set{n}} = 2s \in 2\mathbb{Z}_{\geq 0}$. 
\end{proof}

\begin{remark} \label{rem:01-xydual} In the case $(g,n)=(0,1)$ a straightforward  computation gives 
\begin{align}
		 \Res_{w=0} \frac{\big(x^\vee(w)\big)^{\frac{2p-2k+1}{2p+1}}}{2p-2k+1} \omega^{(0),\vee}_1 = \begin{cases}
		 	u^{2p+1} \frac{1}{p-\frac 12}, & k=1;
		 	\\
		 	0, & k=2,\dots,p.
		 \end{cases}
\end{align} 
\end{remark}

\begin{remark} \label{rem:sim-explanation} Note that $\omega_{n}^{(0),\sim} (w_{\set{n}})$ has an independent intrinsic meaning in terms of the \emph{generalized topological recursion} in the sense of~\cite[Def.~2.8]{ABDKS-gen}. Namely, for the spectral curve
$\mathbb{C}\mathrm{P}^1$ with the global affine coordinate $w$, the standard Bergman kernel $B(w_1,w_2)$ and the differentials $dx^\sim$ and $dy^\sim$ given by
	\begin{align} \label{eq:GenTRspectral}
		dx^\sim & = u^{p+\frac 12} d\Big( \frac 1 {w^{2p+1}}\Big),
		& dy^\sim & =  ud\Big(w^2 + \frac 1 {w^2}\Big),
	\end{align}
the set of special points is given by all $q\in \mathbb{C}\mathrm{P}^1$ s.t. either $dy^\sim|_q=0$ or $w(q)=\infty$. Choose as the input the set of key point $\mathcal{P}$ to coincide with the set of the special points. Then~\eqref{eq:omega-sim} gives a close formula for the genus $0$ differentials of generalized topological recursion in this case. To see this notice that the generalized topological recursion for the $x-y$ dual initial data is trivial, and in genus $0$ the $x-y$ swap formula gives exactly~\eqref{eq:omega-sim}, cf.~\cite[Sec.~7.1]{ABDKS-gen}.
\end{remark}

\subsection{Recursive structure of residues} Now our goal is to compute the right hand side of~\eqref{eq:simplification-of-graphs}. To this end we notice that once we remove the distinguished multiedge $e_o$ from the graphs in the definition of $\omega^{(g),\vee\!\!\!\vee}_n$, they split into $|e_o|$ components, and the computation of the residues in~\eqref{eq:simplification-of-graphs} is independent for each of these components. Thus, it is sufficient to consider the following situation:

Let $\overline{\omega}$ be a $1$-differential expanded in a Laurent series at $w=0$ (we won't assume that it is regular at $w=0$ since our inductive argument below will involve more general setup). Let $\omega_{m}^{\vee\!\!\!\vee} (w_{\set{m}})$ be an $m$-differential defined as
\begin{align} \label{eq:omega-veevee-abstract}
	\omega_{m}^{\vee\!\!\!\vee} (w_{\set{m}})
	& \coloneqq 
	(-1)^m \sum_{\substack {\Gamma:\,g(\Gamma)=0}} \prod_{i=1}^m \widehat{dy}_i
	\sum_{k_i=0}^\infty  (-\widehat{\partial_{y_i}})^{k_i} \coeff {v_i^{k_i}}  \frac{1}{\widehat{dy}_i} \frac{dx_i}{v_i}
	\\ \notag &  \qquad \qquad 
	\prod_{e\in E(\Gamma)} B(w_{e(1)},w_{e(2)})  \frac{ v_{e(1)}}{d x_{e(1)}}\frac{ v_{e(2)}}{d x_{e(2)}}
	\sum_{i=1}^m	\overline{\omega}(w_{i}) \frac{v_{i}}{d x_{i}}.
\end{align} 
Here the sum is taken over all trees with $m$ labeled vertices with usual edges. 

\begin{lemma} \label{lem:veevee-tooperator} For any $k_1,\dots,k_m = 1,\dots,p$, we have:
	\begin{align} \label{eq:residue-tree}
		& u^{\sumk_{\set{m}}+m} \prod_{i=1}^m \Bigg( \Res_{w_i=0} \frac{u^{p-k_i+\frac 12}}{2(p-k_i+\frac 12) w^{2p-2k_i+1}}  \Bigg)
		\omega^{\vee\!\!\!\vee}_m = 	- (2(2p+1))^{1-m}
		u^{p+3/2} \times
		\\ \notag & \qquad 
	\Res_{w=0}
		\sum_{s=0}^{\infty}\frac{(-2)^{m-2}}{s!}(m-1)_s\left(-p+\sumk_{\set{m}}+s+\frac{1}{2}\right)_{m-2}\frac{\overline{\omega}(w)}{w^{2p-2\sumk_{\set{m}}-2m-4s+3}}.
	\end{align} 
\end{lemma}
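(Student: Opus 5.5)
Induction on $m$ looks like the natural route. The recursive step should remove a leaf of the tree, or equivalently a vertex whose removal splits the tree into subtrees, and absorb its residue into a modified one-differential attached to the neighbouring vertex. That is why $\overline{\omega}$ is allowed to be a general Laurent series in~\eqref{eq:omega-veevee-abstract}.

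\textbf{Base case $m=1$.} The graph is a single vertex, and the claim reduces to
\[
u^{k+1}\Res_{w=0}\frac{u^{p-k+\frac12}}{2(p-k+\frac12)w^{2p-2k+1}}\,\Bigl(-\widehat{dy}\sum_{j\ge 0}(-\widehat{\partial_y})^j\coeff{v^j}\frac{dx}{v\,\widehat{dy}}\,\frac{v}{dx}\,\overline{\omega}\Bigr).
\]
The $v$-dependence cancels, so only $j=0$ survives. This leaves $-\Res \frac{u^{p+k+3/2}\,\overline{\omega}}{2(p-k+\frac12)w^{2p-2k+1}}$. I have to match this with the right-hand side at $m=1$, where $(0)_s=\delta_{s,0}$ and $(c)_{-1}=1/(c-1)$. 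This gives $-u^{p+3/2}\,\overline{\omega}/\bigl((-p+k-\frac12)\,w^{2p-2k+1}\bigr)$ up to the factor $-2$. I will check that the powers of $u$ and the signs agree, adjusting normalizations if needed.

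\textbf{Inductive step.} Fix the vertex $i$ that carries $\overline{\omega}(w_i)$ in a given term. Cutting all edges at $i$ splits the tree into subtrees $T_1,\dots,T_r$, each attached to $i$ by a single Bergman kernel. For each subtree, its residues act on $B(w_i,w_j)\,v_i v_j/(dx_i\,dx_j)$ with $w_j$ playing the role of the base vertex. Expanding $B$ in the sector fixed in Lemma~\ref{lem:vee-to-veevee}, the subtree $T_a$ receives the one-differential $\sum_d (d+1)w_i^d w_j^{-d-2}dw_j$, which is a Laurent series in $w_j$. By the induction hypothesis, its residues equal the closed formula with $\overline{\omega}$ replaced by this series. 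This produces an explicit function of $w_i$, essentially a single monomial $w_i^{\,2p-2\sumk_{T_a}-2|T_a|-4s_a+1}$ times a coefficient and times $v_i/dx_i$.

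Then the vertex $i$ carries $\overline{\omega}(w_i)$, $r+1$ factors $v_i/dx_i$ and the operator $\sum_k(-\widehat{\partial_y})^k\coeff{v^k}$. This selects $k=r$, giving $(-\widehat{\partial_y})^r$ applied to $(dx_i)^{-r}$ times monomials times $\overline{\omega}$. I will integrate by parts against the residue kernel $w_i^{-(2p-2k_i+1)}$: each $\widehat{\partial_y}\propto w^{2p+2}\partial_w$ acting on powers of $w$ produces linear factors, and $dx^{-1}\propto w^3(1-w^4)^{-1}$ expands into the $4s$ shifts. Summing over $i$, over the splittings into subtrees, and over trees then requires a combinatorial identity. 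The generating function of labelled trees with weights is Lagrange-inversion-like, so I expect the resulting product of Pochhammers to collapse to $\frac{(m-1)_s}{s!}(-p+\sumk+s+\frac12)_{m-2}$. Along the way I will use identities of Vandermonde/Abel type analogous to the finite-difference identity used in Lemma~\ref{lem:1stCombLemma}.

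\textbf{Main obstacle.} The hard part is this last combinatorial collapse. An alternative I would try first is to avoid explicit tree sums altogether. The $m$-point expression summed over trees is a standard Lagrange inversion formula: genus-zero $x$-$y$ swap with only Bergman edges amounts to composing $\overline{\omega}$ with the inverse series change of variables. On that view, the residues compute coefficients of a power of an inverted series, $(x^\vee)^{\cdots}$ expressed in terms of $x$. The Pochhammer $(c)_{m-2}$ with $(m-1)_s/s!$ is exactly the shape one gets from a Lagrange–Bürmann coefficient of $(1+w^4)^{\alpha}$ taken to a power. I would verify this against the $m=2$ case directly before committing.
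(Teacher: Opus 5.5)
\textbf{Gap 1: the combinatorial identity is assumed, not proved.} Your architecture matches the paper's. You induct on $m$, cut at the vertex carrying $\overline{\omega}$, and feed the Bergman kernel into each subtree as its ``$\overline{\omega}$''. You then move the $\ell$ derivatives $\widehat{\partial_y}$ by parts onto the kernel $w^{-(2p-2k_m+1)}$, which is where $\bigl(\frac{2k_m}{2p+1}\bigr)_{\ell-1}$ comes from, and expand $(1-w^4)^{-\ell}$ as $\sum_{s_0}\binom{s_0+\ell-1}{s_0}w^{4s_0}$. But the step that closes the induction is only expected, not proved.

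After these reductions you are left with exactly the identity of Lemma~\ref{lem:2ndCombLemma}. It says that summing over $\ell$, over ordered partitions $I_1\sqcup\cdots\sqcup I_\ell=\set{m-1}$ and over $s_0+\cdots+s_\ell=s$ of the block factors $\frac{(|I_i|-1)_{s_i}}{s_i!}(-2)^{|I_i|-2}(-p+\sumk_{I_i}+s_i+\frac12)_{|I_i|-2}(2p-2\sumk_{I_i}-2|I_i|-4s_i+3)$, weighted by $\frac{1}{\ell!}(-(2p+1))^{\ell-1}\bigl(\frac{2k_m}{2p+1}\bigr)_{\ell-1}\binom{s_0+\ell-1}{s_0}$, gives $\frac{(m-1)_s}{s!}(-2)^{m-2}(-p+\sumk_{\set{m}}+s+\frac12)_{m-2}$. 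This is the real content of the proof. Appealing to ``Vandermonde/Abel type identities'' or to an unspecified Lagrange inversion does not establish it.

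The paper proves it in four steps:
\begin{enumerate}
\item it packages the $s$-dependence into a generating series in $z$ and rewrites both sides as polynomials in $X=(1-z)^{-1}$;
\item it strips off $X$ using the operator $\exp\bigl(\sum_{d\ge1}\frac{X^d}{d}\sum_i\Delta_i^d\bigr)$, where the $\Delta_i$ are backward differences;
\item this reduces everything to an $s$-free Abel-type identity (in shifted variables), $(-p+k_{\set{m}}+1)_{m-2}=\sum_\ell\frac{1}{\ell!}\sum_{I_1\sqcup\cdots\sqcup I_\ell=\set{m-1}}p^{\ell-1}\bigl(\frac{k_m}{p}\bigr)_{\ell-1}\prod_i(-p+k_{I_i}+1)_{|I_i|-1}$;
\item that identity is proved by induction on $m$, using the specialization $k_1=0$, symmetry and a degree count.
\end{enumerate}
Without this, or a fully worked Lagrange inversion that reproduces the same coefficients including the $s$-dependence, the inductive step is not done.

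\textbf{Gap 2: the subtree step is wrong as written.} The expansion of $B(w_i,w_j)$ depends on whether the subtree's attachment vertex $j$ comes before or after $i$ in the fixed order of residues. You use $\sum_d(d+1)w_i^dw_j^{-d-2}$, which is valid only for $i<j$. With it, the residue against $w_j^{-N}$, where $N=2p-2\sumk_{T_a}-2|T_a|-4s_a+3$, vanishes whenever $N>0$.

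For $m=2$ your recipe therefore returns $0$. In fact the whole answer comes from placing $\overline{\omega}$ at the later vertex. There the leaf sees $\sum_d(d+1)w_j^dw_i^{-d-2}$ and produces $N\,w_i^{-N-1}dw_i$.

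That monomial is what enters the combinatorial identity, with exponent $-(2p-2\sumk_{T_a}-2|T_a|-4s_a+4)$, not the $2p-2\sumk_{T_a}-2|T_a|-4s_a+1$ you state. Its linear factor $N$ must be tracked, because it is one of the block factors in the identity. So the sector and placement bookkeeping has to be handled explicitly; the paper notes that the sign of the exponent determines the relevant sector and relative position.

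A minor point: in your base case the power of $u$ is $u^{p+3/2}$, not $u^{p+k+3/2}$. With that corrected, the $m=1$ case holds exactly, with no normalization to adjust.
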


\begin{proof} For $m=1,2$ it can be checked by an explicit computation. For general $m$ we prove the lemma by induction on the number of vertices of the tree. Assume that the vertex where $\overline{\omega}$ is places has index $\ell$, and without loss of generality assume that it is labeled by $m$ (but we don't assume that it is the last index in the fixed order of residues). Then, when we remove this vertex from the tree, the tree splits into $\ell$ smaller trees whose vertices are labeled by $I_1,\dots,I_\ell$ such that $I_1\sqcup \cdots \sqcup I_\ell = \set{m-1}$. By induction, the part of the residue formula for each of these smaller trees gives 
\begin{align}
	& - (2(2p+1))^{1-|I_i|}u^{p+3/2} \Res_{w'=0}
	\sum_{s_i=0}^{\infty}\frac{(-2)^{|I_i|-2}}{(s_i)!}(|I_i|-1)_{s_i}\left(-p+\sumk_{I_i}+s_i+\frac{1}{2}\right)_{|I_i|-2}\times
	\\ \notag & 
	\qquad \qquad \frac{1}{(w')^{2p-2\sumk_{I_i}-2|I_i|-4s_i+3}}B(w',w)
	\\ \notag & 
	= - (2(2p+1))^{1-|I_i|} u^{p+3/2}
	\sum_{s_i=0}^{\infty}\frac{(-2)^{|I_i|-2}}{(s_i)!}(|I_i|-1)_{s_i}\left(-p+\sumk_{I_i}+s_i+\frac{1}{2}\right)_{|I_i|-2} \times
	\\ \notag & \qquad \qquad (2p-2\sumk_{I_i}-2|I_i|-4s_i+3) \frac{dw}{w^{2p-2\sumk_{I_i}-2|I_i|-4s_i+4}},
\end{align}
where the degree of $w$ might be either positive or negative---this depends on $s_i$ and it determines the sector of expansion as well as the relative position to which $\overline{\omega}$ is attached. Thus the left hand side of~\eqref{eq:residue-tree} can be computed as 
\begin{align} \label{eq:combxcompdualside}
	& \Res_{w=0} \frac{u^{p+\frac 32}}{2(p-k_m+\frac 12) w^{2p-2k_m+1}}
	\sum_{\ell=0}^{m-1} \frac{1}{\ell!} 
	  (-1) \widehat{dy} (-\widehat{\partial_y})^{\ell} \frac{1}{\widehat{dy} (dx)^{\ell}}   \sum_{\substack{I_1\sqcup \cdots \sqcup I_\ell = \set{m-1}\\ I_1,\dots,I_\ell \not=\emptyset}} \overline{\omega}(w)\times
	  \\ \notag & \qquad  \prod_{i=1}^\ell - (2(2p+1))^{1-|I_i|} u^{p+3/2}
	  \sum_{s_i=0}^{\infty}\frac{(-2)^{|I_i|-2}}{(s_i)!}(|I_i|-1)_{s_i}\left(-p+\sumk_{I_i}+s_i+\frac{1}{2}\right)_{|I_i|-2} \times
	  \\ \notag & \qquad \qquad (2p-2\sumk_{I_i}-2|I_i|-4s_i+3) \frac{dw}{w^{2p-2\sumk_{I_i}-2|I_i|-4s_i+4}}
\end{align}
(this and subsequent steps require some care for small $\ell$, but one can check by direct inspection that with our notation the final identity holds nevertheless). Note that for any function $F=F(w)$
\begin{align}
	& \Res_{w=0} \frac{u^{p+\frac 32}}{2(p-k_m+\frac 12) w^{2p-2k_m+1}}
	\sum_{\ell=0}^{m-1} \frac{1}{\ell!} 
	(-1) \widehat{dy} (-\widehat{\partial_y})^{\ell}F(w)
	\\ \notag &
	= \sum_{\ell=0}^{m-1}\frac{1}{\ell!}u^{p+\frac 32 - (p+\frac 12)(\ell-1)} (-1)^{\ell-1} \Big(\frac{2k_m}{2p+1} \Big)_{\ell-1} \Res_{w=0} w^{2k_m+(2p+1)(\ell-2)-1} F(w)dw.
\end{align}
Moreover,
\begin{align}
	\frac{(dw)^{\ell+1}}{\widehat{dy} (dx)^{\ell}} = u^{-(p+\frac 12) - \ell} \frac{(-1)^{\ell+1}}{(2p+1)2^\ell} \frac{w^{2p+2+3\ell}}{(1-w^4)^\ell}.
\end{align}
Thus,~\eqref{eq:combxcompdualside} is equal to 
\begin{align}
	& \Res_{w=0}
	\sum_{\ell=0}^{m-1} \frac{1}{\ell!} \frac{u^{p+\frac 32-(p+\frac 32)\ell}}{(2p+1)2^\ell}  \Big(\frac{2k_m}{2p+1} \Big)_{\ell-1} \sum_{s_0=0}^\infty \binom{s_0+\ell-1}{s_0} w^{2p+2+3\ell+2k_m+(2p+1)(\ell-2)-1+4s_0} \times 
	\\ \notag & \quad 
 \Bigg(\sum_{\substack{I_1\sqcup \cdots \sqcup I_\ell = [m-1]\\ I_1,\dots,I_\ell \not=\emptyset}} 
	\prod_{i=1}^\ell - (2(2p+1))^{1-|I_i|} u^{p+\frac 32}
	\sum_{s_i=0}^{\infty}\frac{(-2)^{|I_i|-2}}{(s_i)!}(|I_i|-1)_{s_i}\times 
	\\ \notag & \qquad \qquad \left(-p+\sumk_{I_i}+s_i+\frac{1}{2}\right)_{|I_i|-2} (2p-2\sumk_{I_i}-2|I_i|-4s_i+3) \frac{1}{w^{2p-2\sumk_{I_i}-2|I_i|-4s_i+4}}\Bigg) \overline{\omega}(w)
	\\ \notag &
	=   -(2(2p+1))^{1-m}u^{p+\frac 32} \Res_{w=0}\ \overline{\omega}(w) \cdot
	\sum_{\ell=0}^{m-1} \frac{1}{\ell!} (-(2p+1))^{\ell-1}\Big(\frac{2k_m}{2p+1} \Big)_{\ell-1} \sum_{s_0=0}^\infty \binom{s_0+\ell-1}{s_0} \times 
	\\ \notag & \quad 
	\Bigg(\sum_{\substack{I_1\sqcup \cdots \sqcup I_\ell = [m-1]\\ I_1,\dots,I_\ell \not=\emptyset}} 
	\prod_{i=1}^\ell 
	\sum_{s_i=0}^{\infty}\frac{(-2)^{|I_i|-2}(|I_i|-1)_{s_i}}{(s_i)!}\left(-p+\sumk_{I_i}+s_i+\frac{1}{2}\right)_{|I_i|-2}\times 
		\\ \notag & \qquad \qquad
	 (2p-2\sumk_{I_i}-2|I_i|-4s_i+3) \Bigg) 
	 \frac{ \overline{\omega}(w) }{w^{2p-2\sumk_{\set{m}}-2m+2-4(s_0+s_1+\cdots+s_m)+1}}.
\end{align}
By Lemma~\ref{lem:2ndCombLemma} given below, the latter expression is indeed equal to the right hand side of~\eqref{eq:residue-tree}.
\end{proof}

A direct corollary of this lemma is the following proposition:
\begin{proposition} \label{prop:xydual} For any $g\geq 0$, $n\geq 1$, $(g,n)\not=(0,1)$, $k_1,\dots,k_n = 1,\dots,p$, we have:
\begin{align} \label{eq:prop:xydual}
				A^{(g),\vee}_n(k_1,\dots,k_n) & = \frac{u^{-\sumk_{\set{n}}-n}}{2^n(2p+1)^n} \sum_{\ell=1}^n \frac{(-2)^\ell (2p+1)^\ell u^{(p+\frac 32)\ell}}{\ell!}  \sum_{\substack{I_1\sqcup \cdots \sqcup I_\ell = \set{n} \\ I_1,\dots,I_\ell \not=\emptyset}} \prod_{i=1}^\ell \Bigg(\Res_{w_i=0} \sum_{s_i=0}^{\infty} \frac{(|I_i|-1)_{s_i}}{s_i!} \times
	\\ \notag & \qquad \qquad
	(-2)^{|I_i|-2}\Big(-p+\sumk_{I_i}+s_i+\frac 12\Big)_{|I_i|-2}
	\frac{1}{w_i^{2p-2\sumk_{I_i} -2|I_i|-4s_i+3}} \Bigg) \overline{\omega}^{(g)}_\ell
	\\ \notag & \quad
	+ \delta_{g,0} \prod_{i=1}^n \Bigg( \Res_{w_i=0} \frac{u^{p-k_i+\frac 12}}{2(p-k_i+\frac 12) w^{2p-2k_i+1}}  \Bigg)
	\omega^{(0),\sim}_n,
\end{align}
where the second summand contributes only some terms constant in $u$ for $n\geq 3$ (and $g=0$). 
\end{proposition}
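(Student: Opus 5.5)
The plan is to combine Lemma~\ref{lem:vee-to-veevee} with Lemma~\ref{lem:veevee-tooperator}. By Lemma~\ref{lem:vee-to-veevee}, for $(g,n)\neq(0,1)$ the quantity $A^{(g),\vee}_n(k_1,\dots,k_n)$ equals the product of residue operators $\prod_i \Res_{w_i=0} \frac{u^{p-k_i+\frac12}}{2(p-k_i+\frac12)w_i^{2p-2k_i+1}}$ applied to $\omega^{(g),\vee\!\!\!\vee}_n$. For $g=0$ one adds the term~\eqref{eq:extraterm} without the $\delta_{n,1}$ piece, and this is precisely the second summand of~\eqref{eq:prop:xydual}. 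Its degree analysis in the proof of Lemma~\ref{lem:vee-to-veevee} shows that it is a non-negative even power of $u$. The claim that it is constant in $u$ for $n\ge3$ needs one more check, namely that the order-of-zero count forces $s=0$ in that regime; I would verify this directly, since the parity and sign constraints on $\sumk_{\set{n}}$ and $s$ are given there.

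Next I would organize the sum in~\eqref{eq:omega-vee-one-stable} by the distinguished multiedge $e_o$. Let $\ell=|e_o|$ and attach it to vertices $j_1,\dots,j_\ell$. Removing $e_o$ from a genus-zero graph whose other multiedges are ordinary edges leaves a forest of $\ell$ trees, one containing each $j_i$. Their vertex sets give an unordered set partition $I_1\sqcup\dots\sqcup I_\ell=\set{n}$ into nonempty blocks, with $j_i\in I_i$ a chosen root. Conversely, every such datum determines a unique graph. The vertex operators, the Bergman factors and the residue operators all factor over the trees, with $\overline{\omega}^{(g)}_\ell$ coupling them only through its $\ell$ arguments. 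So for each tree $I_i$ I would fix the argument $w_{j_i}$ of $\overline{\omega}^{(g)}_\ell$ and apply Lemma~\ref{lem:veevee-tooperator} with $m=|I_i|$. Treating $\overline{\omega}^{(g)}_\ell$ as a $1$-differential in that variable, the sum over the choice of root inside $I_i$ is exactly the factor $\sum_{i}\overline{\omega}(w_i)v_i/dx_i$ in~\eqref{eq:omega-veevee-abstract}. Lemma~\ref{lem:veevee-tooperator} then turns the tree $I_i$ into the operator
\[
-(2(2p+1))^{1-|I_i|}u^{p+\frac32-\sumk_{I_i}-|I_i|}\Res_{w_i=0}\sum_{s_i}\frac{(-2)^{|I_i|-2}}{s_i!}(|I_i|-1)_{s_i}\Big(-p+\sumk_{I_i}+s_i+\tfrac12\Big)_{|I_i|-2}w_i^{-(2p-2\sumk_{I_i}-2|I_i|-4s_i+3)}
\]
acting on the $i$-th argument.

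Finally I would collect constants. The product over $i$ of $-(2(2p+1))^{1-|I_i|}u^{p+\frac32}$ equals $(-1)^\ell(2(2p+1))^{\ell-n}u^{(p+\frac32)\ell}$. This matches $\frac{(-2)^\ell(2p+1)^\ell u^{(p+\frac32)\ell}}{2^n(2p+1)^n}$, and the powers $u^{-\sumk_{I_i}-|I_i|}$ multiply to $u^{-\sumk_{\set{n}}-n}$. Summing over unordered partitions is the same as $\frac{1}{\ell!}$ times the sum over ordered ones, which accounts for the $1/\ell!$ in~\eqref{eq:prop:xydual}. The sign $(-1)^n$ and the ordering of the residues are already absorbed in Lemma~\ref{lem:veevee-tooperator}.

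The main obstacle I expect is justifying the factorization when $\overline{\omega}^{(g)}_\ell$ is a genuine multi-differential rather than a product of $1$-differentials. I would handle it by linearity: expand $\overline{\omega}^{(g)}_\ell$ near $w=0$ as a sum of products of monomials $w_1^{a_1}dw_1\cdots w_\ell^{a_\ell}dw_\ell$ and apply Lemma~\ref{lem:veevee-tooperator} to each factor. This is legitimate because the residue orderings and the sectors of expansion of the Bergman kernels are internal to each tree. I also need to confirm that the sign and normalization conventions of Lemma~\ref{lem:veevee-tooperator} (the extra $u^{\sumk+m}$ on its left side) combine exactly as computed above.
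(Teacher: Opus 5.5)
Your reduction is the paper's own proof, spelled out in more detail. Lemma~\ref{lem:vee-to-veevee} turns $A^{(g),\vee}_n$ into residues of $\omega^{(g),\vee\!\!\!\vee}_n$, plus (for $g=0$) the $\omega^{(0),\sim}_n$ term of Remark~\ref{rem:sim-explanation}. Cutting at $e_o$ gives $\ell$ rooted trees on the blocks $I_1,\dots,I_\ell$, and Lemma~\ref{lem:veevee-tooperator} is applied tree by tree. Your bookkeeping of the signs, of $\prod_i -(2(2p+1))^{1-|I_i|}u^{p+\frac32}$, of $u^{-\sumk_{\set{n}}-n}$ and of $1/\ell!$ is correct. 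Your linearity argument for a genuine multi-differential $\overline{\omega}^{(g)}_\ell$ makes explicit what the paper leaves implicit, as does your remark that residue orders and expansion sectors are internal to each tree.

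The one step that will not go through is your planned check that the order count forces $s=0$ for $n\geq 3$. The count in the proof of Lemma~\ref{lem:vee-to-veevee} gives only $\sumk_{\set{n}}=2p+3-n-2s$, i.e.\ $u$-degree $2s$ with $s\geq 0$ unconstrained, and nothing else removes $s>0$.

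Already $n=3$ is a counterexample. Use the paper's order $\Res_{w_3}\Res_{w_2}\Res_{w_1}$ with $|w_1|<|w_2|<|w_3|$. Only the tree with centre $3$ survives: a leaf attached to a vertex of smaller index receives only powers $\leq -2$ of its own variable from $B$, so its residue vanishes. A direct computation then uses $(-1)^3\widehat{dy}(-\widehat{\partial_y})F=dF$ at the centre and expands $1/(1-w_3^4)$. It shows that the second summand equals $u^{2p-\sumk_{\set{3}}}/(2(2p+1))$ whenever $\sumk_{\set{3}}$ is even and at most $2p$. For $p=3$ and $(k_1,k_2,k_3)=(1,1,2)$ this is $u^2/14$.

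So the qualifier ``constant in $u$'' cannot be obtained by your route, and it is actually too strong for $p\geq 3$. What the argument does give is that the second summand is a monomial of non-negative even degree in $u$. That is exactly what Theorem~\ref{thm:main} needs (``a polynomial in $u^2$''). The paper's proof does not justify the stronger qualifier either; it only points to Remark~\ref{rem:sim-explanation} and the degree count.
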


\begin{proof} By the argument presented above, it is a direct corollary of Lemma~\ref{lem:vee-to-veevee} combined Lemma~\ref{lem:veevee-tooperator} that allows to rewrite the right hand side of~\eqref{eq:simplification-of-graphs} as the first summand and Remark~\ref{rem:sim-explanation} that explains the second summand. 
\end{proof}

\subsection{Combinatorial lemma} The goal of this section is to prove the following lemma:

 \begin{lemma}\label{lem:2ndCombLemma} For any $m \geq 1$ we have the following identity of polynomials in $s$, $p$, $k_1,\dots,k_m$:
 	\begin{align} \label{eq:SecondCombIdentity}
 		&  
 		\frac{(m-1)_s}{s!} (-2)^{m-2} \Big(-p+\sumk_{\set{m}}+s+\frac 12\Big)_{m-2} 
 		\\ \notag &
 		= \sum_{\ell=0}^{m-1} \frac{1}{\ell!} \sum_{\substack{I_1\sqcup \cdots \sqcup I_\ell = \set{m-1}\\ I_1,\dots,I_\ell \not=\emptyset}} (-(2p+1))^{\ell-1} \Big(\frac {2k_m}{2p+1}\Big)_{\ell-1}
 		\sum_{\substack {s_0+s_1+\cdots+s_\ell = s \\ s_0,s_1,\dots,s_\ell \geq 0}} \binom{s_0+\ell-1}{s_0} \times
 		\\ \notag & \quad
 		\prod_{i=1}^\ell
 		\frac{(|I_i|-1)_{s_i}}{s_i!} (-2)^{|I_i|-2} \Big(-p+\sumk_{I_i}+s_i+\frac 12\Big)_{|I_i|-2} (2p-2\sumk_{I_i}-2|I_i|-4s_i+3).
 	\end{align}
 \end{lemma}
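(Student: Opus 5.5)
We propose to prove \eqref{eq:SecondCombIdentity} with generating functions. The sum over set partitions $I_1\sqcup\cdots\sqcup I_\ell=\set{m-1}$ with the weight $\frac1{\ell!}$ is an exponential-formula structure. The summand depends on each block only through $|I_i|$, $\sumk_{I_i}$ and $s_i$. The natural move is therefore to encode the block weight in a single-variable generating series. Set
\[
\phi_I(t)\coloneqq\sum_{s\ge0}\frac{(|I|-1)_s}{s!}(-2)^{|I|-2}\Big(-p+\sumk_I+s+\tfrac12\Big)_{|I|-2}\,t^{s}.
\]
By the left-hand side of \eqref{eq:SecondCombIdentity} (the case $m=|I|$), $\phi_I(t)$ is also the generating function of the left-hand side itself. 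The factor $(2p-2\sumk_{I}-2|I|-4s+3)$ should be read as the operator $(2p-2\sumk_I-2|I|+3-4t\partial_t)$ acting on $\phi_I$. The factor $\binom{s_0+\ell-1}{s_0}$ summed over $s_0$ gives $(1-t)^{-\ell}$. The identity then becomes
\[
\phi_{\set m}(t)=\sum_{\ell}\frac{(-(2p+1))^{\ell-1}}{\ell!}\Big(\tfrac{2k_m}{2p+1}\Big)_{\ell-1}(1-t)^{-\ell}\sum_{I_1\sqcup\cdots\sqcup I_\ell}\prod_i \psi_{I_i}(t),
\]
where $\psi_I=(2p-2\sumk_I-2|I|+3-4t\partial_t)\phi_I$.

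Our first step is to find a closed form for $\phi_I$. The factor $\frac{(m-1)_s}{s!}t^s$ is the expansion of $(1-t)^{1-m}$, and the Pochhammer factor is a polynomial in $s$ that can be realized as a differential operator in $t$. A cleaner route is to recognise these expressions as residues. By the proof of Lemma~\ref{lem:1stCombLemma}, $\phi_I$ is essentially the principal part coefficient of $\partial^{m}t_l/\partial\tau\cdots$ paired with $x^{p-l+1/2}$. This suggests $\phi$ is a coefficient of $(1+w^4)^{c}$-type series, i.e. the coefficient extraction $[z^{a}]$ of a power $(1-4z\ldots)^{-c}$. We would guess and verify the representation
\[
\frac{(m-1)_s}{s!}(-2)^{m-2}(c+s)_{m-2}\ \propto\ [\text{coeff}]\ \partial^{m-2}\big(\ldots\big)^{\ldots},
\]
which is a Lagrange-inversion shape. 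The key observation is that \eqref{eq:SecondCombIdentity} is an identity of polynomials in the $k_i$ where only $k_m$ is singled out. The right-hand side has the form of a Lagrange–Bürmann expansion, $\sum_\ell \frac{1}{\ell!}\partial^{\ell-1}\big(H'\,\Phi^\ell\big)$, in which $(\frac{2k_m}{2p+1})_{\ell-1}(-(2p+1))^{\ell-1}$ arises from $\ell-1$ derivatives of a power $\zeta^{-2k_m}$ with respect to $\zeta^{2p+1}$. Indeed, this factor came from $(-\widehat{\partial_y})^{\ell}$ applied to $w^{2k_m}$ in the proof of Lemma~\ref{lem:veevee-tooperator}.

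The plan is therefore to prove the identity by induction on $m$, mirroring the tree recursion. Since the left-hand side depends on $k_m$ only through $\sumk_{\set m}$, it suffices to show two things: that the right-hand side is symmetric in all $k_i$, and that it equals the left-hand side in the special case $k_1=\cdots=k_{m-1}$ shifted appropriately. Alternatively, both sides are polynomials in $k_m$ of degree at most $m-2$, so we can check agreement at enough values of $k_m$. The most promising specialization is $2k_m/(2p+1)\in\{0,-1,\dots\}$. At these values $(\frac{2k_m}{2p+1})_{\ell-1}$ vanishes for large $\ell$, which truncates the sum to few terms, and these can be matched using the induction hypothesis together with the Chu–Vandermonde convolution for the $s_i$ sums. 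Once the blocks are packaged through the inductive hypothesis into $\phi$'s, the $s$-convolution reduces to Vandermonde's identity $\sum\prod\frac{(a_i)_{s_i}}{s_i!}=\frac{(\sum a_i)_s}{s!}$.

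The main obstacle will be the interaction of the linear factor $(2p-2\sumk_I-2|I|-4s_i+3)$ with the Pochhammer $(\cdot)_{|I|-2}$. We expect that this combination telescopes into a difference, namely $(c)_{|I|-1}$ minus a shifted version, or equivalently a total derivative in the Lagrange picture. First we will try to write $(2c'-4s)\,(c'+s-|I|+\tfrac12\ldots)_{|I|-2}$ as a difference of two Pochhammers of length $|I|-1$, so that the product over blocks becomes a Bürmann-type expansion. If that fails, we will fall back on verifying the identity as polynomials by the degree count in $k_m$ plus evaluation at the $m-1$ truncating points.
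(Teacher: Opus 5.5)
Your generating-function reformulation is correct, and it is exactly the paper's first step. Summing against $t^s$, the binomial $\binom{s_0+\ell-1}{s_0}$ produces $(1-t)^{-\ell}$, and the linear factor becomes the operator $2p-2\sumk_I-2|I|+3-4t\partial_t$ applied to $\phi_I$. After that, however, the proposal only lists candidate strategies, and there is a genuine gap: nothing removes the $t$-dependence from both sides in a way compatible with the product over blocks.

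In the paper this step is the heart of the argument.
\begin{itemize}
\item After dividing by $(-2)^{m-2}$ and replacing $p$ by $p-\frac12$, one expands in the basis $(s+1)_a$ and uses $\sum_{s\ge0}(s+1)_a t^s=a!\,X^{a+1}$ with $X=(1-t)^{-1}$. So $\phi_I$ and your $\psi_I$ are simply explicit polynomials in $X$.
\item After shifting $k_1,\dots,k_{m-1}$ by one and dividing by $X^{m-1}$, the left-hand side and each block take the form $\sum_c X^c\frac{(n-1+c)!}{c!\,(n-1-c)!}(x+y_{\set{n}})_{n-1-c}=\exp\big(\sum_{d\ge1}\frac{X^d}{d}\sum_i\Delta_i^d\big)(x+y_{\set{n}})_{n-1}$, where $\Delta_i$ is the backward difference in $y_i$.
\item This operator is multiplicative over blocks with disjoint variables, does not involve $k_m$, and is invertible. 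The identity therefore collapses to its $s$-free part
\[
(-p+\sumk_{\set{m}}+1)_{m-2}=\sum_{\ell=0}^{m-1}\frac{1}{\ell!}\sum_{\substack{I_1\sqcup\cdots\sqcup I_\ell=\set{m-1}\\ I_1,\dots,I_\ell\neq\emptyset}}p^{\ell-1}\Big(\frac{k_m}{p}\Big)_{\ell-1}\prod_{i=1}^{\ell}\big(-p+\sumk_{I_i}+1\big)_{|I_i|-1}.
\]
\item This is proved by induction on $m$. One checks it at $k_1=0$ using the case $m-1$, and gets vanishing of the difference on every hyperplane $k_i=0$, $i<m$, by symmetry. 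One concludes because a polynomial of degree at most $m-2$ in $k_1,\dots,k_{m-1}$ that is divisible by $k_1\cdots k_{m-1}$ must be zero.
\end{itemize}
Your "guess and verify" closed form for $\phi_I$, the Lagrange--B\"urmann reading of $\big(\frac{2k_m}{2p+1}\big)_{\ell-1}$, and the hoped-for telescoping of the linear factor are never turned into statements that could replace this.

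The fallbacks you name would not close the gap either.
\begin{itemize}
\item \emph{Symmetry plus a diagonal check.} Proving that the right-hand side is symmetric under exchanging $k_m$ with some $k_i$, $i<m$, is not easier than the lemma itself, since $k_m$ enters in a completely different way. Even granting symmetry, agreement on the slice $k_1=\cdots=k_{m-1}$ does not determine a symmetric polynomial of degree $m-2$. Already in degree $5$, the seven monomial symmetric functions restrict to the six-dimensional space of binary quintics in the two slice coordinates, so for $m\ge 7$ the criterion fails.
\item \emph{Interpolation in $k_m$.} Using the $m-1$ nodes $2k_m/(2p+1)=0,-1,\dots,-(m-2)$ is a legitimate uniqueness argument, but the truncation is illusory. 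At $2k_m/(2p+1)=-j$ only the terms with $\ell\ge j+2$ drop out, so at the last node nothing is removed and you face the full identity again. Only the node $k_m=0$, where the single term $\ell=1$ survives, is genuinely easy.
\item \emph{Induction and Chu--Vandermonde.} Your induction hypothesis expresses a $\phi$ through products of $\psi$'s. The right-hand side for $m$, however, is built from the blocks $\psi_{I_i}$, i.e.\ $\phi_{I_i}$ with the extra operator applied, so the hypothesis does not apply to them as it stands. Chu--Vandermonde alone cannot absorb Pochhammer symbols whose arguments depend on the $s_i$.
\end{itemize}
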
 


\begin{remark}
For $m=1$ it is an identity of \emph{Laurent} polynomials. We include this case for completeness; it is also the only case that has a non-trivial contribution from the summand with $\ell=0$ on the right hand side. 
\end{remark}

\begin{proof} Firstly, divide~\eqref{eq:SecondCombIdentity} by $(-2)^{m-2}$, and then shift $p$ by $1/2$ (that is, new $p$ is old $p+1/2$). We have:
\begin{align} \label{eq:SecondIdentityBeforeGeneratingSeries}
	&  
	\frac{(m-1)_{s}}{s!} \Big(-p+k_{\set{m}}+s+1\Big)_{m-2} 
	\\ \notag &
	= \sum_{\ell=0}^{m-1} \frac{1}{\ell!} \sum_{\substack{I_1\sqcup \cdots \sqcup I_\ell = \set{m-1}\\ I_1,\dots,I_\ell \not=\emptyset}} p^{\ell-1} \Big(\frac {k_m}{p}\Big)_{\ell-1}
	 \sum_{\substack {s_0+s_1+\cdots+s_\ell = s \\ s_0,s_1,\dots,s_\ell \geq 0}} \binom{s_0+\ell-1}{s_0} \times
	\\ \notag & \quad
	\prod_{i=1}^\ell
	\frac{(|I_i|-1)_{s_i}}{s_i!} \Big(-p+k_{I_i}+s_i+1\Big)_{|I_i|-2} \Big(-p+k_{I_i}+s_i+|I_i|-1+s_i\Big)
\end{align}
Now we arrange~\eqref{eq:SecondIdentityBeforeGeneratingSeries} into a generating series as 
\begin{align} \label{eq:SecondIdentityGeneratingSeries} 
	& \sum_{s=0}^{\infty}  
	\frac{(m-1)_s}{s!}  \Big(-p+k_{\set{m}}+s+1\Big)_{m-2} z^{s}
= \sum_{\ell=0}^{m-1} \frac{1}{\ell!} \sum_{\substack{I_1\sqcup \cdots \sqcup I_\ell = \set{m-1}\\ I_1,\dots,I_\ell \not=\emptyset}} p^{\ell-1} \Big(\frac {k_m}{p}\Big)_{\ell-1}
(1-z)^{-\ell} \times
\\ \notag & \quad
\prod_{i=1}^\ell\sum_{s_i=0}^{\infty}
\frac{(|I_i|-1)_{s_i}}{s_i!} \Big(-p+k_{I_i}+s_i+1\Big)_{|I_i|-2} \Big(-p+k_{I_i}+s_i+|I_i|-1+s_i\Big)z^{s_i}.
\end{align}
The idea now is to expand everything in the Pochhammer symbols $(s+1)_{a-1}$, $a\geq 1$ and use that for $a\geq 0$
we have 
$a!(1-z)^{-(a+1)}= \sum_{s=0}^\infty (s+1)_{a} z^s$.
This makes both sides of~\eqref{eq:SecondIdentityGeneratingSeries} a polynomial in $X\coloneqq (1-z)^{-1}$. By a straightforward bruteforce calculation, \eqref{eq:SecondIdentityGeneratingSeries} can be rewritten as 
\begin{align}
	& \sum_{a=m-1}^{2m-3} X^a \frac{(a-1)!}{(a-m+1)! (2m-3-a)!} (-p+k_{\set{m}}+1)_{2m-3-a}
	\\ \notag &
	= \sum_{\ell=0}^{m-1} \frac{1}{\ell!} \sum_{\substack{I_1\sqcup \cdots \sqcup I_\ell = \set{m-1}\ I_1,\dots,I_\ell \not=\emptyset}} p^{\ell-1} \Big(\frac {k_m}{p}\Big)_{\ell-1} X^\ell
	\times
	\\ \notag & \qquad
	\prod_{i=1}^\ell \sum_{a_i=|I_i|-1}^{2|I_i|-2} X^{a_i} \frac{a_i!}{(a_i-|I_i|+1)! (2|I_i|-2-a_i)!}
	\Big(-p+k_{I_i}+1\Big)_{2|I_i|-2-a_i}.
\end{align}
We should note that in this equation the $k_i$ with $i\in\set{m-1}$ are shifted by $1$, that is, the new $k_i$ is the old $k_i-1$ if $i\in\set{m-1}$ and the new $k_m$ is the old $k_m$. Further dividing by both sides of the identity by $X^{m-1} = \prod_{i=1}^\ell X^{|I_i|}$ we get
\begin{align} \label{eq:Seconddentity-finalformbeforeexp}
	& \sum_{c=0}^{m-2} X^c \frac{(m-2+c)!}{c! (m-2-c)!} (-p+k_{\set{m}}+1)_{m-2-c}
	\\ \notag &
	= \sum_{\ell=0}^{m-1} \frac{1}{\ell!} \sum_{\substack{I_1\sqcup \cdots \sqcup I_\ell = \set{m-1}\\ I_1,\dots,I_\ell \not=\emptyset}} p^{\ell-1} \Big(\frac {k_m}{p}\Big)_{\ell-1}
	\prod_{i=1}^\ell \sum_{c_i=0}^{|I_i|-1} X^{c_i} \frac{(|I_i|-1+c_i)!}{c_i! (|I_i|-1-c_i)!}
	\Big(-p+k_{I_i}+1\Big)_{|I_i|-1-c_i}.
\end{align}

Now note that for any $n\geq 1$ we have the following auxiliary identity:
\begin{align}
\sum_{c=0}^{n-1} X^c \frac{(n-1+c)!}{c! (n-1-c)!} (x+y_{\set{n}})_{n-1-c} = \exp\Big(\sum_{d=1}^\infty \frac{X^d}{d} \sum_{i=1}^n \Delta_i^d \Big) (x+y_{\set{n}})_{n-1},
\end{align}
where $\Delta_i$ is the backwards difference operator in $i$th variable whose  action on a polynomial $P=P(y_{\set{n}})$ is defined as $\Delta_i \colon P\mapsto (1-e^{-\partial_{y_i}}) P$.
Applying this auxiliary identity for $n=m-1$, and $x=-p+k_m+1$, and $y_{\set{m-1}}=k_{\set{m-1}}$ on the left hand side of~\eqref{eq:Seconddentity-finalformbeforeexp}, and for $x=-p+1$, $n = |I_i|$ and $y_{\set{n}} = k_{I_i}$ to each of the $\ell$ factors on the right hand side of~\eqref{eq:Seconddentity-finalformbeforeexp}, we rewrite~\eqref{eq:Seconddentity-finalformbeforeexp} as
\begin{align}\label{eq:SecondCombIdentity-expform}
	& \exp\Big(\sum_{d=1}^\infty \frac{X^d}{d} \sum_{i=1}^n \Delta_i^d \Big)  (-p+k_{\set{m}}+1)_{m-2}
	\\ \notag &
	= \exp\Big(\sum_{d=1}^\infty \frac{X^d}{d} \sum_{i=1}^n \Delta_i^d \Big) \sum_{\ell=0}^{m-1} \frac{1}{\ell!} \sum_{\substack{I_1\sqcup \cdots \sqcup I_\ell = \set{m-1}\\ I_1,\dots,I_\ell \not=\emptyset}} p^{\ell-1} \Big(\frac {k_m}{p}\Big)_{\ell-1}
	\prod_{i=1}^\ell 	\Big(-p+k_{I_i}+1\Big)_{|I_i|-1}.
\end{align}
Thus, in order to prove~\eqref{eq:SecondCombIdentity-expform} (whch is equivalent to~\eqref{eq:SecondCombIdentity}), we only have to prove that 
\begin{align} \label{eq:identity-final-form}
	& (-p+k_{\set{m}}+1)_{m-2}
	=  \sum_{\ell=0}^{m-1} \frac{1}{\ell!} \sum_{\substack{I_1\sqcup \cdots \sqcup I_\ell = \set{m-1}\\ I_1,\dots,I_\ell \not=\emptyset}} p^{\ell-1} \Big(\frac {k_m}{p}\Big)_{\ell-1}
	\prod_{i=1}^\ell 	\Big(-p+k_{I_i}+1\Big)_{|I_i|-1}.
\end{align}
This can be done by induction on $m$. The case $m=1$ is a bit exceptional since it is an equality of two Laurent polynomials that reads $(-p+k_1)^{-1} = (-p+k_1)^{-1}$. For $m=2$ we get the identity $1=1$. For any $q\geq 3$, one can check (using as the induction assumption that~\eqref{eq:identity-final-form} holds for $m=q-1$) that~\eqref{eq:identity-final-form} also holds for $m=q$ at $k_1=0$. By symmetry in the arguments $k_1,\dots,k_{m-1}$ we conclude that~\eqref{eq:identity-final-form} also holds for $m=q$ at $k_i=0$ for all $i=1,\dots,m-1$. Since the only polynomial of degree $m-2$ in the variables $k_{\set{m-1}}$ that vanished at $k_i=0$ for all $i=1,\dots,m-1$ is contant zero, we conclude that~\eqref{eq:identity-final-form} holds for $m=q$ and thus, by induction, for any $m\geq 1$. 
\end{proof}

\printbibliography

\end{document}